\documentclass[false]{llncs}
    \pagestyle{plain}
    %\settopmatter{printacmref=false}
    %\renewcommand\footnotetextcopyrightpermission[1]{}

    %\usepackage{fullpage}
    
    %\usepackage{amsmath}
    %\usepackage[usenames,dvipsnames]{xcolor}
    %\usepackage[colorlinks,citecolor=blue,linkcolor=BrickRed]{hyperref}
    %\usepackage{makeidx} 
    %\usepackage{algorithm}
    %\usepackage{algorithmic}
    \usepackage{graphicx,tipa}
    \usepackage[ruled]{algorithm}
    \usepackage[noend]{algpseudocode}
    \algdef{SE}[DOWHILE]{Do}{doWhile}{\algorithmicdo}[1]{\algorithmicwhile\ #1}
    \usepackage{comment}
    \usepackage{booktabs} % For formal tables

\usepackage{color}
\usepackage{subfig}
\usepackage{epsfig}
\usepackage{epstopdf}
\usepackage{amsfonts}
\usepackage{latexsym}
\usepackage{amssymb}
\usepackage{amsmath}
%\usepackage{fullpage}
%\usepackage{thmtools}
    
    %----------------------------------------------------------------------------------------
    %----------------------------------------------------------------------------------------
    %----------------------------------------------------------------------------------------
    % Theorems & Definitions
    %\newtheorem{theorem}{Theorem}[section]
    \newtheorem{claims}[theorem]{Claim}
    \newtheorem{observation}[theorem]{Observation}
    %\newtheorem{fact}[theorem]{Fact}
    %\theoremstyle{definition}
    %\newtheorem{example}[theorem]{Example}
    %\newtheorem{algorithm}[theorem]{Algorithm}
    %\newtheorem{definition}[theorem]{Definition}
    %\newtheorem{assumption}[theorem]{Assumption}
    %\newtheorem{remark}[theorem]{Remark}
    %\newtheorem{problem}[theorem]{Problem}
    %\newtheorem*{theorem_notag}{}

    %----------------------------------------------------------------------------------------
    %----------------------------------------------------------------------------------------
    %----------------------------------------------------------------------------------------
    % General Macros

    \newcommand{\reals}{\mathbb{R}}

    \newcommand{\sinn}[1]{\sin \left({#1}\right)}
    \newcommand{\coss}[1]{\cos \left({#1}\right)}

    \newcommand{\acoss}[1]{\cos^{-1} \left({#1}\right)}
    \newcommand{\atann}[1]{\tan^{-1} \left({#1}\right)}

    \newcommand{\pair}[2]{\left({#1},\ {#2}\right)}

    \newcommand{\BO}[1]{\mathcal{O} \left({#1}\right)}

    \newcommand{\diff}[2]{\frac{d{#1}}{d{#2}}}
    \newcommand{\abs}[1]{\left|{#1}\right|}
    %----------------------------------------------------------------------------------------
    %----------------------------------------------------------------------------------------
    %----------------------------------------------------------------------------------------
    % Paper specific macros
    \newcommand{\ignore}[1]{}

    \newcommand{\UC}{\mathcal{U}}
    \newcommand{\IT}{\mathcal{T}}
    \newcommand{\IG}{\mathcal{G}}
    
    \newcommand{\IS}{\mathcal{S}}
    \newcommand{\IC}{\mathcal{C}}
    \newcommand{\ID}{\mathcal{D}}

    \newcommand{\eps}{\epsilon}
    \newcommand{\sclass}{\mathbb{S}}
    \newcommand{\symclass}{\mathbb{S}_{sym}}

    \newcommand{\Alg}{\mathcal{A}}

    \newcommand{\pe}[1]{\textsc{PEvac}$_{#1}$}

    \newcommand{\includeFig}[4]{\begin{figure}[htb!] \begin{center} \includegraphics[{#1}]{#3}\caption{\label{#2}#4} \end{center} \end{figure}} % usage: \includeFig{width=2in,keepaspectratio}{label}{file}{caption}    

\begin{document}

\title{Priority Evacuation from a Disk \\
Using Mobile Robots
%Priority Evacuation of an Agent in Group Search
%\thanks{{\bf Very preliminary draft.}}
\thanks{This is the full version of the paper with the same title which will appear in the proceedings of the 25th International Colloquium on Structural Information and Communication Complexity, June 18-21,  2018, Ma'ale HaHamisha, Israel.}
}

\author{
    J. Czyzowicz\inst{1}\inst{7}
    \and
    K. Georgiou\inst{2}\inst{7}
    \and
    R. Killick\inst{3}\inst{8}
    \and
    E. Kranakis\inst{3}\inst{7}
    \and
    D. Krizanc\inst{4}
    \and
    L. Narayanan\inst{5}\inst{7}
    \and
    J. Opatrny\inst{5}\inst{7}
    \and
    S. Shende\inst{6}
    }
    
    \institute{
    D\'{e}partemant d'informatique, Universit\'{e} du Qu\'{e}bec en Outaouais,  Gatineau, Canada.
    \and
    Department of Mathematics, Ryerson University, Toronto, Canada
    \and
    School of Computer Science, Carleton University, Ottawa, Ontario, Canada.
    \and
    Department of Mathematics \& Comp. Sci., Wesleyan University, Middletown CT, USA
    \and
    Department of Comp. Sci. and Software Eng., Concordia University, Montreal, QC,  Canada
    \and
    Department of Computer Science, Rutgers University, Camden, USA
    \and
    Research supported in part by NSERC Discovery grant.
    \and
    Research supported by the Ontario Graduate Scholarship.
    }

\maketitle
\begin{abstract}

%We study a new problem involving a collection of communicating mobile robots searching for an exit such that one of the robots in the group is given priority for evacuating (from an exit).  More precisely, a collection of mobile robots is placed at the center of a unit disk. Each robot can move at maximum speed one. An {\em exit} (represented as a point) is placed on the perimeter of the disk at a position unknown to the robots. The exit is recognized by a robot only when visiting its position. The collection contains one distinguished  robot, called the {\em queen}, which needs to be evacuated through the exit. The queen and all remaining $n$ robots, called {\em servants}, collaborate in order to evacuate the queen as fast as possible, but the servants are merely assisting the queen's search for the exit and do not have to evacuate. The robots may communicate wirelessly throughout the search and if a servant finds the exit first it can instantaneously communicate its position to the queen, which uses the shortest path across the disc to evacuate.
We introduce and study a new search-type problem with ($n+1$)-robots on a disk. The searchers (robots) all start from the center of the disk, have unit speed, and can communicate wirelessly. The goal is for a distinguished robot (the queen) to reach and evacuate from an exit that is hidden on the perimeter of the disk in as little time as possible. The remaining $n$ robots (servants) are there to facilitate the queen's objective and are not required to reach the hidden exit. We provide upper and lower bounds for the time required to evacuate the queen from a unit disk. Namely, we propose an algorithm specifying the trajectories of the robots which guarantees evacuation of the queen in time always better than $2 + 4(\sqrt{2}-1) \frac{\pi}{n}$ for $n \geq 4$ servants. We also demonstrate that for $n \geq 4$ servants the queen cannot be evacuated in time less than $2+\frac{\pi}{n}+\frac{2}{n^2}$.

\end{abstract}

\keywords{
Mobile Robots,
Priority,
Evacuation,
Exit,
Group Search, 
Disk,
Wireless Communication, 
Queen,
Servants.
}
%\newpage

%\tableofcontents

%\newpage

\section{Introduction}

A fundamental research topic in mathematics and computer science concerns search, whereby a group of mobile robots need to collectively explore an environment in order to find a hidden target. In the scenarios considered so far, the goal was to optimize the time when the first searcher reaches the target position. More recently, researchers studied the evacuation problem
%of {\em group search}, 
in which it is required to minimize the time of arrival to the target position of the last mobile robot in the group. 
%This question is equivalent to the so called {\em evacuation problem}. 
%In one version of this problem (wireless communication) a group of mobile searchers first look for an unknown {\em exit point} in the environment. Then, after the first robot finds the exit, its position is communicated to all other robots, which then walk there in order to evacuate. The goal is to minimize the total time, from the start of the search to the evacuation of the last robot. Clearly, in the solutions to this problem, it is usually good to compromise on the time of the search itself such that the evacuation portion of the procedure takes less time and thus the total time is improved.
%The question that we are interested in this paper concerns a search procedure which terminates when the target is reached by some distinguished robot. In other words, we can assume that the collection of robots contains a leader, known in advance, and as long as the leader does not get to the target position, the search is considered incomplete. We paraphrase our question as the evacuation of the particular robot of the collection that we call {\em queen}.
%However, there are many situations when the search of the target is performed by a collection of mobile searchers which include some distinguished leader who is the only searcher in the team allowed to perform a particular critical task. Thus, although all team members are participating in the search, the process may conclude only when the leader arrives at the target position. 
In the work done on search so far, all robots are generally assumed to have exactly the same capabilities. However, it is quite natural to consider collaborative tasks in which the participant robots have different capabilities. For example, robots may have different maximum speeds, or have different communication capabilities. Robots with different speeds have been studied in the context of rendezvous \cite{feinerman2017fast} and evacuation \cite{lamprou2016fast}. In the context of search, a natural situation may be that only 
%a small subset 
one of the robots has the capability to address an urgent need at the target, for example, performing an emergency procedure, or closing a breach in the perimeter. The remaining robots can help in searching for the target, but their arrival at the target does not accomplish the main purpose of finding the target. Therefore, the collective goal of the robots is to get the special robot to the target as soon as possible.
In this paper, we are interested in such a type of search problem, which grants {\em priority} to a pre-selected participant.
%, while the search procedure can terminate only when the target is reached by some distinguished robot. 
In other words, we assume that the collection of robots contains a leader, known in advance, and as long as the leader does not get to the target position, search is considered incomplete. %Moreover, we paraphrase our question as the evacuation of the particular mobile searcher in the collection which we call queen.

%Traditional search and evacuation in mathematics and theoretical computer science is concerned with the exploration of an environment (e.g., spatial, structural, graphic, geometric) by a group of mobile agents which may collaboratively search for an exit placed within a given search domain. In all existing research in this area, individual searchers are usually modelled as autonomous agents (or robots) which although they may differ in specifications like perception ability and range of view, locomotive skills and speed, efficacy and power consumption during their movement, communication capabilities (e.g., face-to-face, limited range or wireless), memory and control are otherwise preferentially indistinguishable in that their search and evacuation needs are assumed to be of equal importance. As such no searcher's evacuation needs are considered more important than the evacuation needs of any other (see~\cite{ahlswede1987search,alpern2002theory,stone1975theory}). 

More specifically, in this paper we propose and investigate the {\em priority evacuation} problem, a new form of group search in which a given selected searcher in the group is deemed more important than the rest. This distinguished robot is given priority over all other searchers during the evacuation process in that it should be evacuated as early as possible upon the exit being located by any searcher.

%More generally, we may define a {\em weighted evacuation} problem (for a given group of agents): we can differentiate on agent preferences by assigning a weight $w_i$ to each agent $i$ and require to evacuate a subset of agents of total weight $\geq W$ in minimum time. With this formulation in mind, the regular evacuation problem (see \cite{CGGKMP}) is the case where $w_i = 1$ for all agents and $W = n$, while for the queen problem previously proposed, $w_i=0$ for all agents except the queen for which $w_{queen} = 1$ and $W = 1$.

\subsection{Model}

In the priority evacuation, or \pe{n}\ problem, $n+1$ robots (searchers) are placed at the center of a unit disk. There is a target (exit), placed at an unknown location on the boundary of the disk. The target can be discovered by any robot walking over it. A robot that finds the exit instantaneously broadcasts its current position. Among the robots there is a distinguished one called the {\em queen} and the remaining $n$ robots are referred to as {\em servants}. The goal is to minimize the queen's \textit{evacuation time}, i.e. the worst case total time until the queen reaches the target. We assume that all robots, including the queen, may walk using maximum unit speed. We note that the queen may or may not actively participate in the search of the exit.

%We need to design an algorithm producing the movements of all robots (schedule). A robot that finds the exit, while performing its movement designed by our algorithm, communicates wirelessly its position and the queen immediately walks to it using the shortest path. The position of the exit, which maximizes the sum of the time of its discovery, plus the distance to the exit from the position of the queen at the moment of its discovery, is the cost of the schedule. We attempt to design an algorithm producing a schedule of the smallest possible cost.

%The above evacuation model was first proposed in \cite{CGGKMP}.

\ignore{
We propose evacuation algorithms for a queen aided by $n$ servants which can move with maximum speed one. We adapt an evacuation model which was first proposed in \cite{CGGKMP} where the exit is located at an unknown location on the perimeter of a unit disk while the queen and the servants can move with max speed one and start their exploration at the center of the disk.  The queen's identity is known to all servants and the servants' identities are known to the queen. The exit is located at an unknown location on the perimeter of a unit disk while the queen and the servants are starting their search at the center of the disk simultaneously.  The agents can move with maximum speed one and communicate by broadcasting their findings wirelessly at any time during the evacuation process. The radius of the unit disk is set to one unit of length and the robots can move with max speed one. The unit disk geometric domain being used also has the advantage that the worst-case evacuation time for the queen is the same as the worst-case competitive ratio for evacuation.
}
%ADD NECESSARY DEFINITIONS AND DETAILS

%\begin{figure}[!htb]
%\begin{center}
%\includegraphics[width=10cm]{figures/th1.pdf}
%\end{center}
%\caption{An arrangement of seats; moviegoers may enter only from the left
%and the numbering of the seats is $1$ to $n$ from left to right.}
%\label{fig:th1}
%\end{figure}

%which is further examined in Algorithm~\ref{alg:0jumpa}. 
%\vspace{-0.3cm}
%\begin{algorithm}[H]
%\caption{0-Jump Algorithm}\label{alg:0jumpa}
%\begin{algorithmic}[1]
%\State  {Walk to the perimeter of the disk}
%\State {Continue walking on the perimeter counterclockwise}
%\If {you reach fence endpoint}
%\State {Jump along the corresponding chord of length}
%\State {$2\sin (\beta/2)$}
%\Else
%\State {Walk on perimeter until you find treasure}
%\EndIf
%\end{algorithmic}
%\end{algorithm}

\subsection{Related work}

Search and exploration have been extensively studied in mathematics and various fields of computer science.  If the environment is not known in advance, search implies exploration, and it usually involves mapping and localizing searchers within the environment \cite{AH00,DKP91,HIKK01,PY}. However, even for the case of a known, simple domain like a line, there have been several interesting studies attempting to optimize the search time. These were initiated with the seminal works of Bellman~\cite{bellman1963optimal} and Beck~\cite{beck1964linear}, in which the authors attempted to minimize the competitive ratio in a stochastic setting. 
%thus proving that time $9d$ is needed to guarantee finding an exit (or target) situated at an unknown distance $d$ from the origin. 
After the appearance of \cite{baezayates1993searching}, where a search by a single robot was studied for infinite lines and planes, several other works on linear search followed (cf.  \cite{alpern2002theory})
%,beck1964linear,beck1965more,beck1984linear,beck1970yet,beck1973return,bellman1963optimal}
 and more recently the search by a single searcher was studied for different models, e.g., when the turn cost was considered \cite{demaine2006online}, when a bound on the distance to the target is known in advance \cite{Bose13}, and when the target is moving or for more general linear cost functions \cite{Bose16}.

For the case of a collection of searchers, numerous scenarios have been studied, such as: graph or geometric terrains, known or unknown environments, stationary or mobile targets, etc. (cf. \cite{FT08}). In many papers, the objective is to decide the feasibility of the search or to minimize its search time. %The optimization of the search using multiple robots often involves synchronization of the efforts of individual robots. %This task sometimes turns out to be difficult and many search problems involving multiple robots happen to be NP-hard.

The evacuation problem from the disk was introduced in \cite{CGGKMP} where two types of robots' communication were studied -- the wireless one and communication by contact (also called face-to-face). The bounds for evacuation of two robots communicating face-to-face were later improved in  \cite{DBLP:conf/ciac/CzyzowiczGKNOV15} and in \cite{Watten2017}.
The case of a disk environment with more than one exit was considered in \cite{DBLP:conf/icdcn/CzyzowiczDGKM16} and \cite{pattanayak2017evacuating}. Other variations included evacuation from environments such as regular triangles and squares 
\cite{DBLP:conf/adhoc-now/CzyzowiczKKNOS15}, the case of two robots having different maximal speeds \cite{lamprou2016fast}, and the evacuation problem when one of the robots is crash or byzantine faulty \cite{georgioudiskfaulty2017}.

Group search and evacuation in the line environment were studied in \cite{SIROCCO16,Groupsearch}. The authors of \cite{Groupsearch} proved, somewhat surprisingly, that having many robots using maximal speed $1$ does not reduce the optimal search time as compared to the search using only a single robot. However, interestingly, \cite{Groupsearch} shows that the same bound for group search (and evacuation) is achieved for two robots having speeds $1$ and $1/3$. For both types of robots' communication scenarios, \cite{SIROCCO16} presents optimal evacuation algorithms for two robots having arbitrary, possibly distinct, maximal speeds in the line environment.

%The problem of searching in the  two-dimensional plane by one or more searchers, has been considered by several authors, including \cite{baezayates1993searching,BS95}. 
%CAN WE ADD SOMETHING 

%To the best of our knowledge, the evacuation problem proposed in this paper is new and has never been investigated in the past. 

A priority evacuation-type problem has been previously considered in~\cite{GeorgiouKK16,GeorgiouKK17} but with different terminology. Using the jargon of the current paper, an immobile queen is hidden somewhere on the unit disk, and a number of robots try to locate her, and fetch (evacuate) her to an exit which is also hidden. The performance of the evacuation algorithm is measured by the time the queen reaches the exit. 
Apart from these results, and to the best of our knowledge nothing is known about the priority evacuation problem. In this work we provide a general strategy for the case of $n \geq 4$ servants. When there are fewer than $4$ servants more ad hoc strategies must be employed which do not fit with the general framework developed here and they are therefore treated elsewhere \cite{CGKKKNOS}.

%For the case of $n \leq 3$ servants 

% concerning $n \geq 4$ servants as studied here. The only known related study is in the forthcoming paper \cite{CGKKKNOS} by the same authors which provides a detailed analysis for the non-trivial case of $n \leq 3$ servants.

\subsection{Results of the paper}

Section~\ref{sec2} introduces nomenclature and notation and discusses preliminaries.  In Section~\ref{sec6} we provide an algorithm that evacuates the queen in time always smaller than $2 + 4(\sqrt{2}-1) \frac{\pi}{n}$ for $n \geq 4$ servants (the exact evacuation times of our algorithm must be calculated numerically). In Section~\ref{sec7} we demonstrate that for $n \geq 4$ servants the queen cannot be evacuated in time less than $1 + \frac 2n \cdot \arccos(-\frac 2n ) + \sqrt{1- \frac 4{n^2}}$, or, asymptotically, $2+\frac \pi n+\frac{2}{n^2}$. These results improve upon naive upper and lower bounds of $2+\frac{2\pi}{n}$ and $2+\frac{\pi}{n+1}$ respectively (see Section~\ref{sec:classes} and \ref{sec7}). A summary of the evacuation times for our algorithm (numerical results) as well as the upper and lower bounds (non-trivial and naive) is provided in Table~\ref{tbl:evac} and in Figure~\ref{fig:asymptotic_evac_times}. %(which is bigger than $2+ \frac{\pi}n + \frac 2{n^2}$).
%We first discuss a simple lower bound and then we give a more involved argument for an improved lower bound. We observe that in our algorithm the queen does not need to participate in the search. More exactly we show that if the queen searches a portion of the disk boundary, then the time of its evacuation will not be improved. 
%For a smaller number of servants ($n \leq 3$) we refer the reader to \cite{CGKKKNOS}.
We conclude the paper in Section~\ref{secconclusion} with a discussion of open problems.
%\marginpar{Expand and complete. Add table of results.}
\begin{table}[h] \caption{Evacuation times $\IT$ of the queen using Algorithm~\ref{alg:ns} (numerical results). The upper bound of $2+4(\sqrt{2}-1)\frac{\pi}{n}$ (Theorem~\ref{thm:ub}), and the lower bound of $1+\frac 2n \acoss{\frac{-2}{n}}+\sqrt{1-\frac{4}{n^2}}$ (Theorem~\ref{thm:lb}) are also provided. For comparison, the naive upper bound and lower bound of $2+\frac{2\pi}{n}$ (see Section~\ref{sec:classes}) and $2+\frac{\pi}{n+1}$ (see Section~\ref{sec7}) are included.} \label{tbl:evac}
    \begin{center}
    \begin{tabular}{ |c|c|c|c|c|c| }
        \hline
          & $\IT$ & UB & LB & UB & LB\\
         $n$ & (Alg~\ref{alg:ns}) & (Thm~\ref{thm:ub}) & (Thm~\ref{thm:lb}) & Naive & Naive \\
         \hline
         4 & 3.113 & 3.301 & 2.913 & 3.571 & 2.628\\ 
         5 & 2.905 & 3.041 & 2.709 & 3.257 & 2.524\\ 
         6 & 2.762 & 2.868 & 2.580 & 3.047 & 2.449\\ 
         7 & 2.660 & 2.744 & 2.490 & 2.898 & 2.393\\ 
         8 & 2.582 & 2.651 & 2.424 & 2.785 & 2.349\\
         \hline
    \end{tabular}
    \end{center}
\end{table}
\vspace{-0.5cm}
\includeFig{width=4.5in,keepaspectratio}{fig:asymptotic_evac_times}{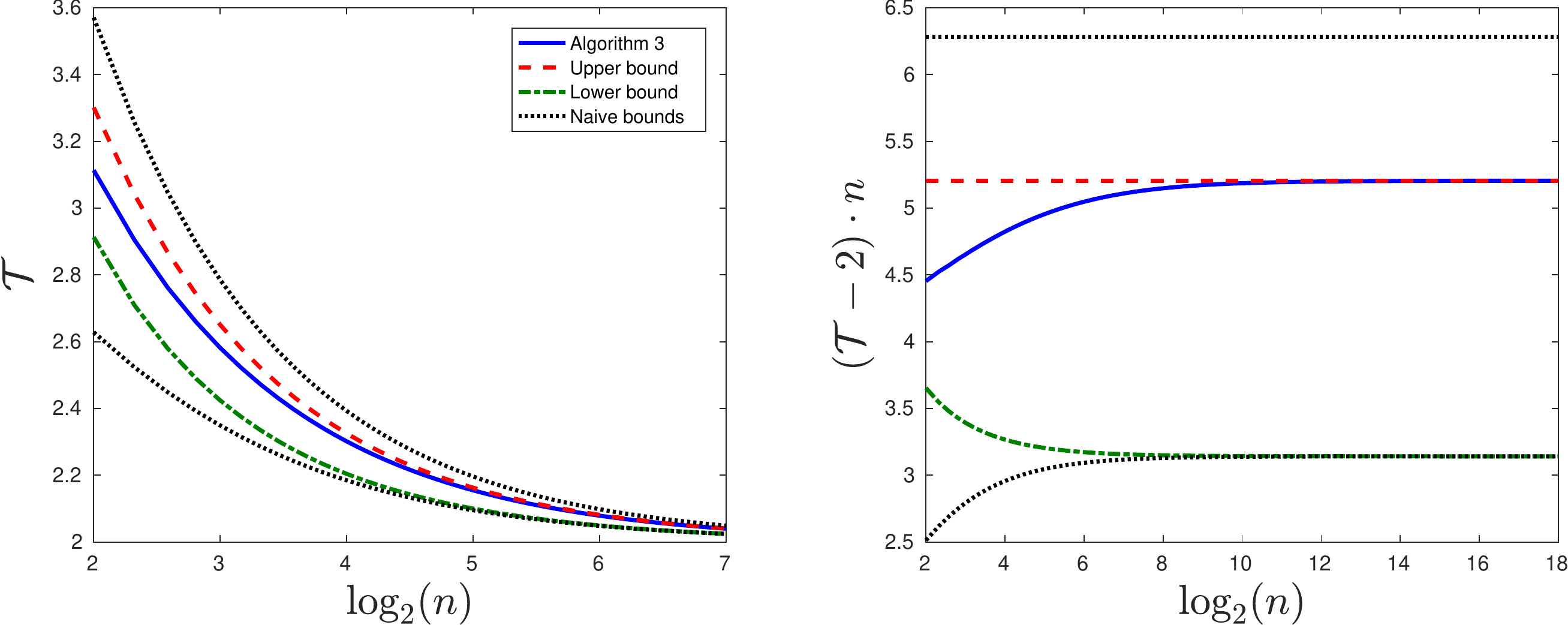}{Evacuation times $
\IT$ of Algorithm~\ref{alg:ns} for $n \in [4,\ 2^{7}]$ (left) and $n \in [4,\ 2^{18}]$ (right). The upper bound of $2+4(\sqrt{2}-1)\frac{\pi}{n}$ (Theorem~\ref{thm:ub}), the lower bound of $1+\frac 2n \acoss{\frac{-2}{n}}+\sqrt{1-\frac{4}{n^2}}$ (Theorem~\ref{thm:lb}) are also provided. For comparison, a naive upper bound and lower bound of $2+\frac{2\pi}{n}$ (see Section~\ref{sec:classes}) and $2+\frac{\pi}{n+1}$ (see Section~\ref{sec7}) are included.}

%%%%%%%%%%%%%%%%%%%%%%%%%%

\section{Notation and Preliminaries}\label{sec2}

In this section we provide some basic notation and terminology and introduce two broad classes of evacuation algorithms.

\subsection{Notation}

We denote by $\UC$ the unit circle in $\reals^2$ centered at the origin $O=\pair{0}{0}$ which must be evacuated by the queen and we assume that all robots start from the origin. We use $n$ to denote the number of servants, and use $Q(t)$ and $S_k(t)$, $k = 1,\ \ldots,\ n$, to represent the trajectories of the queen and $k^{th}$ servant respectively. The set of all servant trajectories is represented by $\IS = \{S_k(t);\ k=1,\ldots,\ n\}$. A trajectory will be given as a parametric function of time and, when referring to a robot's trajectory, it will be implied that we mean the path taken by the robot in the case that the exit has not been found.

\subsection{Evacuation algorithms}\label{sec:classes}
A priority evacuation algorithm $\Alg$ is specified by the trajectories of the queen and servants, $\Alg = \{Q(t)\} \cup \IS$. We say that $\Alg$ solves the \pe{n} problem if, in finite time, all points of $\UC$ are visited/discovered by at least one robot. The evacuation time $\IT$ of an algorithm solving the \pe{n} problem is defined to be the worst-case time taken for the queen to reach the exit. As such, the evacuation time will be composed of two parts: the time taken until the exit is discovered plus the time needed for the queen to reach the exit once it has been found.

We will find it useful to define the restricted class of evacuation algorithms $\sclass$ containing all those algorithms in which: a) the queen does not participate in searching for the exit, b) the servants initially move as quickly as they can to the perimeter of $\UC$, c) each servant searches either counter-clockwise or clockwise along the perimeter of $\UC$ at full speed, and, d) each servant stops and is no longer used once it reaches an already discovered point of $\UC$. Algorithms in this class can be defined by the trajectory of the queen $Q(t)$ together with the sets $\Phi = \{\phi_k \in [0,\ 2\pi];\ k=1,\ \ldots,\ n\}$ and $\Sigma = \{\sigma_k=\pm1;\ k=1,\ \ldots,\ n\}$ which respectively specify the angular positions on $\UC$ to which the servants initially move, and the directions in which each servant searches. We will enforce an ordering on the sets $\Phi$ and $\Sigma$ such that for $\phi_k \in \Phi$, $1 \leq k \leq n-1$, we have $\phi_k \leq \phi_{k+1}$. With this notation we can express the trajectory of the $k^{th}$ servant during the time it is searching as 
$S_k(t) = \pair{\coss{\phi_k + \sigma_k (t-1)}}{\sinn{\phi_k+ \sigma_k (t-1)}}$. %Since each servant searches a continuous arc and stops when reaching an already discovered point we note that we can express the $t_k$ as
%\[t_k = \begin{cases}
%    \phi_{k+1}-\phi_k,& \sigma_k = \sigma_{k+1}\\
%    \frac{1}{2}(\phi_{k+1}-\phi_k),& \mbox{otherwise}
%\end{cases}
%\]

We additionally define the class of algorithms $\symclass \subset \sclass$ containing those algorithms for which we can split the set of servants into two groups $\IS = \IS_+ \cup \IS_-$ where: a) servants in $\IS_+$ follow trajectories which are reflections about the $x$-axis\footnote{The choice of the $x$-axis is arbitrary since we may always rotate $\UC$. What is important is that a diameter of symmetry exists.} of servants in $\IS_-$, and, b) all servants in $\IS_+$ search counter-clockwise \footnote{Again, these choices of search directions are arbitrary since we can reflect $\UC$ about the $y$-axis. What is important is that all servants within a group search in the same direction.}. In the case that $n$ is odd we permit one servant to follow a trajectory that is symmetric about the $x$-axis. For an algorithm in $\symclass$ we may write $\Phi = \Phi_+ \cup \Phi_-$ where $\Phi_+$ (resp. $\Phi_-$) specifies the positions on $\UC$ to which the servants above (resp. below) the $x$-axis initially move. Formally we may write $\Phi_+ = \{\phi_k \in [0, \pi]; k=1,\ \ldots, \left\lceil\frac n2\right\rceil\}$ and $\Phi_- = -\Phi_+$ for even $n$ and $\Phi_- = \{-\phi_k;\ k=2,\ \ldots, \lceil \frac n2 \rceil\}$ for odd $n$. In the class $\symclass$ the directions in which the servants search are always counter-clockwise (resp. clockwise) for robots in $\Phi_+$ (resp. $\Phi_-$) and thus an algorithm $\Alg \in \symclass$ is entirely specified by the set $\{Q(t)\}\cup\Phi_+$.

As a warm-up to the next section, and to demonstrate the intuitive nature of these definitions, consider the following trivial algorithm which achieves an evacuation time of $2 + \frac{2\pi}{n}$: the queen remains at the origin until the exit is found and the servants move directly to equally spaced locations on the perimeter of $\UC$ each searching an arc of length $\frac{2\pi}{n}$ in the counter-clockwise direction. This algorithm can be seen to be in the class $\sclass$ and we can succinctly represent the algorithm as follows

%\vspace{-0.5cm}
\begin{algorithm}[H] \caption{Trivial Evacuation 1, $\Alg\in\sclass$} \label{alg:trivial1}
    \begin{algorithmic}[1]
        \State $Q(t) = \pair{0}{0}$.
        \State $\Phi = \{\frac{(k-1)}{n}2\pi;\ k=1,\ldots,n\}$
        \State $\Sigma = \{1;\ k=1,\ldots,n\}$
    \end{algorithmic}
\end{algorithm}
%\vspace{-0.5cm}

Observe that the above algorithm is not in $\symclass$. We can, however, give an equivalent algorithm in $\symclass$ which achieves the same evacuation time. This algorithm is depicted in Figure~\ref{fig:trivial_algos} along with Algorithm~\ref{alg:trivial1} for the case that $n=8$.
\iffalse
\begin{algorithm}[H] \caption{Trivial Evacuation 2, $\Alg'\in\symclass$} \label{alg:trivial2}
    \begin{algorithmic}[1]
        \State $Q(t) = \pair{0}{0}$.
        \State $\Phi_+ = \{\frac{(k-1)}{n}2\pi;\ k=1,\ldots,\ \frac n2\}$
    \end{algorithmic}
\end{algorithm}
See Figure~\ref{fig:trivial_algos} for a depiction of the algorithms when $n=8$.
\fi
\includeFig{width=4in,keepaspectratio}{fig:trivial_algos}{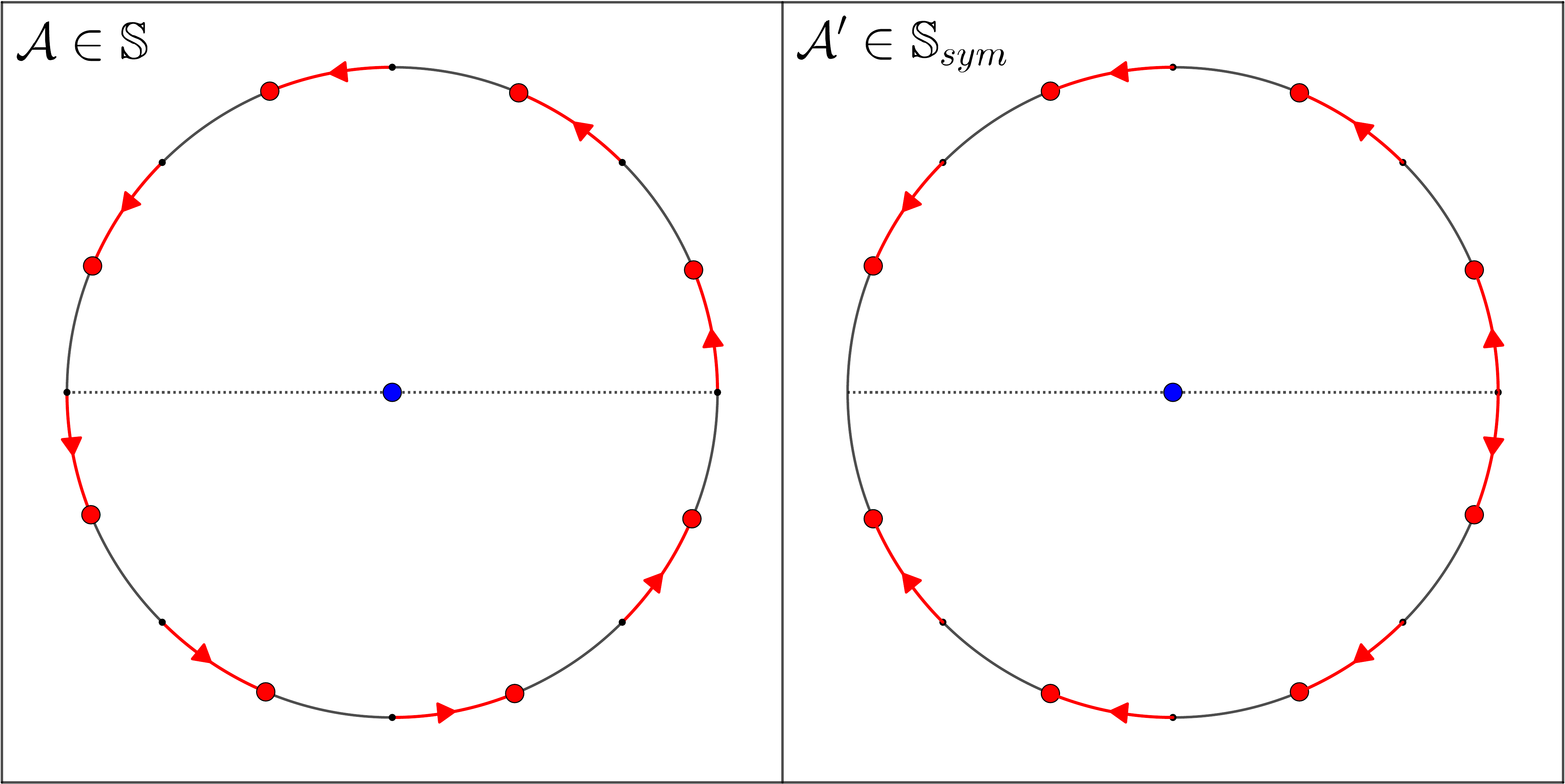}{Depiction of the two trivial algorithms each achieving an evacuation time of $2+\frac{2\pi}{n}$. Both algorithms are in the class $\sclass$ and the algorithm on the right is also in the class $\symclass$. The queen is indicated by the blue point and the servants by the red points. A red arc indicates points that have been discovered.}

\section{Upper Bound}
\label{sec6}
In the previous section we introduced two evacuation algorithms solving \pe{n} with evacuation time $2 + \frac{2\pi}{n}$. We will show that this can be improved:
\begin{theorem}\label{thm:ub}
    There exists an algorithm solving \pe{n} for $n \geq 4$ with an evacuation time at most
    $2 + 4(\sqrt{2}-1) \frac{\pi}{n} \approx 2 + 1.657\frac{\pi}{n}$.
\end{theorem}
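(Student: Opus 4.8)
The plan is to construct an explicit algorithm in the symmetric class $\symclass$ and bound its worst-case evacuation time. The key intuition is that the trivial bound $2 + \frac{2\pi}{n}$ is wasteful because it keeps the queen stationary at the origin while the servants each search an arc of length $\frac{2\pi}{n}$; once a servant finds the exit, the queen must then travel the full distance (up to $2$, the diameter) to reach it. The improvement must come from positioning the queen cleverly during the search so that the \emph{sum} of (search time until discovery) $+$ (queen-to-exit travel distance) is reduced in the worst case. The factor $4(\sqrt{2}-1) \approx 1.657 < 2$ strongly suggests that the optimal queen trajectory moves the queen outward along a diameter of symmetry (the $x$-axis) so that she meets a discovered exit partway, and the $\sqrt{2}$ hints that a balancing/optimization argument equates two competing worst cases, likely producing a right-isosceles-type geometry.

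First I would fix the symmetric configuration: place $\lceil n/2 \rceil$ servants above the $x$-axis and their reflections below, so that by the reflection symmetry of $\symclass$ it suffices to analyze the case where the exit lies in the upper half of $\UC$. Each pair of symmetric servants collectively searches two arcs, and by spacing the $\phi_k$ appropriately each servant is responsible for an arc of length roughly $\frac{\pi}{\lceil n/2\rceil} \approx \frac{2\pi}{n}$. The queen's trajectory $Q(t)$ I would take to move along the positive $x$-axis at unit speed (or a piecewise-linear path that tracks the region currently being swept), so that at the moment any servant reports the exit at angle $\theta$, the queen is already displaced toward the perimeter and need only cover the remaining chord. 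The evacuation time for a given exit position is then $t_{\text{found}}(\theta) + \text{dist}(Q(t_{\text{found}}),\text{exit})$, and I would write this as an explicit function of $\theta$ using the servant trajectory formula $S_k(t) = \pair{\coss{\phi_k + \sigma_k(t-1)}}{\sinn{\phi_k+\sigma_k(t-1)}}$ and the chord-length expressions $\chord{\cdot}$ already set up in the preamble.

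The main step is the worst-case optimization over $\theta$. Within the arc assigned to a single servant, the discovery time increases as the exit sits farther along the search direction, while the queen's remaining distance to that exit changes in the opposite sense depending on where the queen has advanced; the worst case is the exit position that maximizes the total. I would differentiate (or argue by monotonicity/convexity) to locate this maximizing $\theta$ within each arc, then optimize the design parameters — the queen's speed profile and the exact placement of the $\phi_k$ — to minimize that maximum. I expect the balancing condition to force the worst case inside each arc to equal the worst case at the arc's far endpoint, and solving that equation should be where the constant $4(\sqrt{2}-1)$ emerges. Since the theorem only claims an \emph{upper} bound, I do not need the globally optimal trajectory: it suffices to exhibit one specific admissible choice (the one whose analysis yields $2 + 4(\sqrt{2}-1)\frac{\pi}{n}$) and verify the bound holds for every exit location.

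\textbf{The hard part} will be handling the two distinct sources of worst-case delay simultaneously: the exit could be found early but far from the queen's current position, or found late but nearby, and these trade off as the queen moves. Getting a clean closed-form upper bound (rather than only a numerically-computable one, which the paper concedes is needed for the \emph{exact} times) requires over-estimating the true evacuation time by a tractable quantity whose maximum is exactly $2 + 4(\sqrt{2}-1)\frac{\pi}{n}$. I would therefore aim for a slightly loose but analytically clean bound — for instance replacing the exact meeting-point geometry by the chord to the arc endpoint and using $\sin x \le x$ type estimates — chosen precisely so that the resulting maximization is exactly solvable and yields the stated $\sqrt{2}$ constant, while the genuinely optimal times (Table~\ref{tbl:evac}) stay below it.
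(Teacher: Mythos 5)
There is a genuine gap, and it is structural: equal (or roughly equal) arcs cannot beat the trivial bound, no matter what the queen does. If each servant searches an arc of length $2\pi/n$, then all $n$ arcs are completed simultaneously at time $1+\frac{2\pi}{n}$, and the $n$ far endpoints of these arcs --- which are (roughly) equally spaced around $\UC$ --- all remain undiscovered until that final instant. Whatever the queen's position at that time, if she is at distance $d$ from the origin then by pigeonhole some undiscovered endpoint lies within angle $\pi/n$ of her antipodal direction, hence at distance at least $\sqrt{1+d^2+2d\cos(\pi/n)} \geq 1+d\cos(\pi/n)$ from her; the adversary places the exit there, forcing evacuation time at least $1+\frac{2\pi}{n}+1+d\cos(\pi/n) \geq 2+\frac{2\pi}{n}$. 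So under your arc structure the queen's best move is to stand still at the origin, and no balancing argument, queen speed profile, or $\sin x \leq x$ estimate can produce $2+4(\sqrt{2}-1)\frac{\pi}{n}$. Relatedly, your plan to have the queen ``track the region currently being swept'' is incoherent with equal arcs: the search frontier is then $n$ points spread all around $\UC$, not a single region she can approach.

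The idea you are missing is that the arc lengths must be nonuniform and calibrated jointly with the queen's motion; this is the heart of the paper's construction. There, the queen waits at the origin until a time $\alpha=\Theta(1/n)$ (this waiting parameter is itself one of the two quantities optimized, so ``move at unit speed from the start'' is also not right) and then walks at full speed to $(-1,0)$, the point searched \emph{last}; the servants' arcs are short near $(1,0)$, where exits are discovered early while the queen is still central, and grow longer toward $(-1,0)$. The calibration is an ``intercept chasing'' condition: writing $\IT_p$ for the target evacuation time, the circle of radius $\IT_p-t$ centered at the queen cuts $\UC$ at two moving points $A_\pm(t)$, and servant $k$ is placed so that it stops searching exactly when $A_+$ catches up to it --- i.e., each servant's finish time plus the queen's distance to its finish point equals exactly $\IT_p$, which is the correct realization of your ``balancing'' intuition, with all $n$ endpoint worst cases tied. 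Finally, the constant does not come from a per-arc closed-form optimization: the placement recursion $\phi_{k+1}=\phi_k+t_k$ is turned, as $n\to\infty$, into an integral equation and then an ODE whose solution relates the queen's delay $\alpha=a\frac{\pi}{n}$ to the slack $\rho=q\frac{\pi}{n}$ by $q=\frac{2(2-a)^2}{4-a}$; minimizing $a+q$ gives $a=2(2-\sqrt{2})$, $q=2(3\sqrt{2}-4)$, and $a+q=4(\sqrt{2}-1)$. Without the nonuniform, intercept-driven arc placement, optimizing the remaining parameters cannot reach this bound.
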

We will prove Theorem~\ref{thm:ub} constructively and present an evacuation algorithm in the class $\symclass$ achieving the desired upper bound for $n\geq 4$ servants. For ease of presentation we will assume that $n$ is even. Furthermore, as it will greatly simplify the algebra, we will redefine all times (including the evacuation time) to start from the moment the servants first reach the perimeter. To avoid confusion we will use $\IT_p$ to represent the evacuation time of an algorithm as measured from the moment the servants reach the perimeter. The total evacuation time will thus be $\IT = \IT_p + 1$.

As we will describe an algorithm in the class $\symclass$ we will only need to specify the queen's trajectory $Q(t)$ and the initial angular positions $\Phi_+$ of the servants lying above the $x$-axis. We start by giving the trajectory for the queen which we parametrize using $\alpha > 0$:
\begin{equation}\label{eq:queen}
    Q(t) = \begin{cases}
    \pair{0}{0},& 0 \leq t < \alpha\\
    \pair{\alpha-t}{0},& \alpha \leq t < \alpha+1\\
    \pair{-1}{0},& t \geq \alpha+1,
\end{cases}
\end{equation}
In words, the queen waits at the origin until the time $t=\alpha$ at which moment she begins moving at full speed along the negative $x$-axis stopping when she arrives to the point $\pair{-1}{0}$ at the time $t=\alpha+1$. The crux of the algorithm will be in specifying the set $\Phi_+$. In order to do this we consider the following simple observation:
\begin{observation}\label{obs:disk1}
    If the queen is to achieve an evacuation time of $\IT_p$, then, for all $t < \IT_p$, all of the undiscovered points of $\UC$ must remain inside the disk centered on the queen with radius $\IT_p-t$.
\end{observation}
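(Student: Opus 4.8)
The plan is to prove the logically equivalent contrapositive. Suppose, for the sake of contradiction, that there is a time $t < \IT_p$ and a point $p \in \UC$ that is still undiscovered at time $t$ yet lies strictly outside the disk centred on the queen of radius $\IT_p - t$; that is, $\norm{Q(t) - p} > \IT_p - t$. I would then exhibit a placement of the exit that forces an evacuation time exceeding $\IT_p$, contradicting the hypothesis that the algorithm achieves evacuation time $\IT_p$ (which, by definition, is a worst-case guarantee over all admissible exit locations).

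The step that makes the adversarial choice legitimate is an indistinguishability argument: since $p$ has not been visited by any robot up to time $t$, the execution of the deterministic algorithm up to time $t$ --- and in particular the queen's position $Q(t)$ --- is identical whether the true exit is located at $p$ or at some other undiscovered point, because no robot has yet located the exit and so no position-revealing broadcast has occurred. Hence placing the exit at $p$ is consistent with everything the robots have observed, and in this scenario the queen genuinely occupies $Q(t)$ at time $t$.

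With the exit at $p$, I would invoke only the unit-speed constraint to lower bound the first time at which the queen can reach $p$. From time $t$ onward the queen moves with speed at most $1$, so for any $T \ge t$ the triangle inequality gives $\norm{Q(t)-p} \le \norm{Q(t) - Q(T)} + \norm{Q(T) - p} \le (T - t) + \norm{Q(T)-p}$. For the queen to be at $p$ she needs $\norm{Q(T)-p}=0$, whence $T \ge t + \norm{Q(t)-p} > t + (\IT_p - t) = \IT_p$. Crucially this bound is independent of whatever trajectory the queen follows after time $t$ (including any reaction to a later broadcast announcing the discovery of $p$), so she cannot reach the exit before time $\IT_p$, and the evacuation time for this placement strictly exceeds $\IT_p$ --- the desired contradiction.

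I expect the only genuinely delicate point to be the indistinguishability argument of the second paragraph: one must verify that ``undiscovered at time $t$'' is exactly the condition guaranteeing that the exit may still be consistently located at $p$, so that $Q(t)$ is indeed the queen's position in that scenario. The remaining steps are immediate consequences of the unit-speed bound and the worst-case definition of $\IT_p$, and require no nontrivial computation.
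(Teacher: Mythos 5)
Your proposal is correct and takes essentially the same approach the paper intends: the paper states this as an Observation with no written proof, treating it as self-evident, and your argument (adversarial placement of the exit at an undiscovered far point, indistinguishability of the execution before any discovery broadcast, and the unit-speed triangle-inequality bound $T \ge t + \norm{Q(t)-p} > \IT_p$) is precisely the standard justification underlying it. Nothing in your write-up deviates from or adds beyond what the paper implicitly relies on, so there is nothing to flag.
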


\includeFig{width=5in,keepaspectratio}{fig:intercepts}{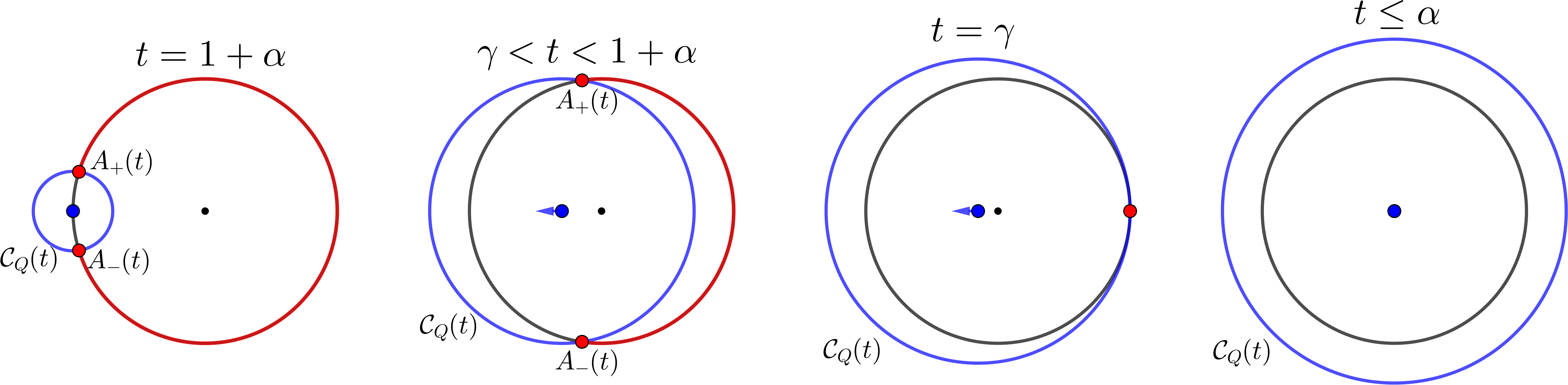}{Illustration of the queen's trajectory $Q(t)$ (blue point) and the motion of the intercepts $A_+(t)$ and $A_-(t)$. The blue circle represents the circle $\IC_Q(t)$ and the black circle represents the circle $\UC$. A red arc indicates those positions of $\UC$ that must be discovered at the indicated time. Time flows from right to left.}

Assume that we have an algorithm with evacuation time $\IT_p$ and define $\IC_Q(t)$ as the circle centered on the queen with radius $\IT_p-t$. Then, in light of Observation~\ref{obs:disk1}, it is not so hard to imagine that the intersection points of the circles $\IC_Q(t)$ and $\UC$ will be of importance. Thus, assume that $\IT_p$ is small enough that at some time $t \geq \alpha$ the circles $\IC_Q(t)$ and $\UC$ intersect. Considering the form of the queen's trajectory, we can conclude that the circles $\UC$ and $\IC_Q(t)$ will first intersect at the time $\gamma = \frac{\IT_p + \alpha - 1}{2}$ at the point $\pair{1}{0}$. For times $t > \gamma$ the circles will intersect at two points $A_\pm$ which are symmetric about the $x$-axis and which move from right to left along the perimeter of $\UC$ (see Figure~\ref{fig:intercepts}). The importance of the points $A_\pm$ is clear when one considers that $A_\pm$ mark the boundary between those points of $\UC$ which must be discovered and those which may yet be undiscovered at the time $t$. Intuitively, we will want to position the servants such that they are searching only when they are to the left of $A_+$ and $A_-$. In particular, a servant will stop searching at precisely the moment the intercept $A_+$ or $A_-$ catches up to it (with a small caveat to be described shortly). This condition will allow us to specify the set $\Phi_+$.
 
At this time we will find it useful to re-express the evacuation time as $\IT_p = 1+\alpha+\rho$ where $\rho$ is a parameter that will ultimately depend on $\alpha$. Intuitively, $\rho$ represents the radius of $\IC_Q(t)$ at the moment the queen reaches the perimeter of $\UC$ and its inclusion will greatly simplify algebra. Note that, with this definition, the circles $\IC_Q(t)$ and $\UC$ will first intersect at the time $\gamma = \alpha + \frac{\rho}{2}$.

As we only need to specify the set $\Phi_+$ we will only consider the intercept $A_+$. The coordinates of $A_+$ for times $\gamma \leq t \leq \alpha+1$ can be determined by simultaneously solving the implicit equations for $\UC$ and $\IC_Q(t)$, i.e. 
%\begin{align*}
    $\UC:\  x^2+y^2=1$
%\end{align*}
and
%\begin{align*}
    $\IC_Q(t):\  (x-\alpha+t)^2 + y^2 = (1+\alpha+\rho-t)^2$.
%\end{align*}
We find that $A_+(t) = \pair{x_A(t)}{y_A(t)}$ where
\begin{equation}\label{eq:xA}
    x_A(t) = \frac{\rho(2+\rho)}{2(t-\alpha)} - 1 - \rho
\end{equation}
and
\begin{equation}\label{eq:yA}
    y_A(t) = \frac{\sqrt{\rho (\rho+2)[2(t-\alpha)-\rho][\rho+2-2(t-\alpha)]}}{2(t-\alpha)}
\end{equation}
The angular position of $A_+$ will be represented as $\phi_A$ and is given by:
\begin{equation}\label{eq:phiA}
    \phi_A(t) = \atann{\frac{y_A(t)}{x_A(t)}}.
\end{equation}
We define $\nu_A$ as the speed at which $A_+$ moves along the perimeter of $\UC$. We can determine $\nu_A$ using $\nu_A(t) = \sqrt{\left(\diff{x_A}{t}\right)^2 + \left(\diff{y_A}{t}\right)^2}$ from which we find that:
\begin{equation}\label{eq:nuA}
\nu_A(t) = \frac{1}{t-\alpha}\sqrt{\frac{\rho(\rho+2)}{[\rho+2-2(t-\alpha)][2(t-\alpha)-\rho]}}
\end{equation}

Now consider the form of the function $\nu_A(t)$. For times just after $t=\alpha$ we can see that $A_+$ will move with a speed $\nu_A>>1$ and, as such, no single servant will be able to stay to the left of $A_+$ for long. What is not so obvious from $\eqref{eq:nuA}$ is that $\nu_A$ continuously decreases until some time $\tau$ at which $\nu_A = 1$.\footnote{It is not guaranteed that for all $\rho > 0$ this intercept will reach a speed of one before the queen reaches the perimeter of $\UC$. However, we will choose a $\rho$ such that this does happen.} Furthermore, starting at the time $\tau$ there will be an interval of time during which $\nu_A \leq 1$. Thus, if the intercept reaches a servant at exactly the time $\tau$ that servant does not have to stop searching. We will choose $\rho$ to ensure that the servant $S_{n/2} \in \IS_+$ satisfies exactly this property.

Therefore we can describe the following general overview of our algorithm: the servant $S_1$ begins at $\phi_1=0$ (for even $n$) and searches until the time $t_1$ at which $S_1(t_1) = A_+(t_1)$ or when $t_1 + \phi_1 = \phi_A(t_1)$. The servant $S_2$ will begin its search at the position $\phi_2 = \phi_1 + t_1$ and it will search for a time $t_2$ until $S_2(t_2) = A_+(t_2)$ or until $t_2 + \phi_2 = \phi_A(t_2)$. The servant $S_3$ will begin at the position $\phi_3 = \phi_2+t_2 = \phi_1+t_1+t_2$, and so on. Continuing on like this we can see that the servant $S_k$ will begin its search at the position
$\phi_{k+1} = \phi_{k}+t_{k} = \phi_1+\sum_{i=1}^{k} t_i$
with the $t_k$ satisfying
$t_k = \phi_A(t_k)-\phi_{k}$
or, equivalently,
$\phi_1+\sum_{i=1}^{k} t_i = \phi_A(t_k)$.
We want the servant $S_{n/2}$ to be coincident with the intercept $A_+$ at exactly the time $\tau$ (recall that $\tau$ is the time at which the speed of $A_+$ is $\nu_A = 1$) and thus we will choose $\rho$ to satisfy
$\phi_{n/2} + \tau = \phi_A(\tau)$.
In this case the servant $S_{n/2}$ will search for a total time $\pi-\phi_{n/2}$ after which all of $\UC$ will have been discovered. 

To extend this algorithm to the case that $n$ is odd we will need to split the trajectory of the servant $S_1 \in \IS_+$ between the upper and lower halves of $\UC$. We will therefore start the servant $S_1$ at the position $\phi_1 = \frac{-t_1}{2}$. All of the other relevant equations remain unchanged.

We provide links (\cite{anim4} and \cite{anim8}) to short animations of the algorithm for $n=4,\ 8$. In these animations the queen is represented by the blue point, the servants by red points, and the intercepts $A_\pm$ by green points. A plot of the evacuation time as a function of the time at which the servants find the exit is also shown. Note that the servants stop searching at the exact moment the intercept reaches them (except for the two servants furthest to the left) and at these moments the evacuation time is maximized. The two servants that are last active will be coincident with the intercepts at the moment these intercepts reach a speed of one, and, again, at this moment the evacuation time is maximized. In total there will be $n$ different locations for the exit (counting the top and bottom of $\UC$) which will maximize the evacuation time. A keen eye will note that the queen reaches the perimeter of $\UC$ before the servants have finished searching the perimeter and this would appear to hint that Algorithm~\ref{alg:ns} can be improved. We will argue in Section~\ref{secconclusion} that this is not the case.
%\begin{conjecture}\label{conj:opt}
%Algorithm~\ref{alg:ns} optimally evacuates the queen.
%\end{conjecture}
%The justification of this conjecture are best explained after Section~\ref{sec7} and thus we will differ a discussion on this matter to Section~\ref{secconclusion}.

Figure~\ref{fig:n_8} illustrates an example configuration for the described algorithm when $n=8$. The algorithm is formally presented in Algorithm~\ref{alg:ns} where we have left $\alpha$ as a parameter. We claim that Algorithm~\ref{alg:ns} will always do better than the bound of Theorem~\ref{thm:ub} when the evacuation time is minimized over $\alpha$. We will now prove this claim.

%\vspace{-0.5cm}
\begin{algorithm}[H] \caption{IntersectChase($\alpha$), $\Alg_\alpha \in \symclass$} \label{alg:ns}
    \begin{algorithmic}[1]
        \State 
        \begin{equation*}
            Q(t) = \begin{cases}
            \pair{0}{0},& 0 \leq t < \alpha\\
            \pair{\alpha-t}{0},& \alpha \leq t < \alpha+1\\
            \pair{-1}{0},& t \geq \alpha+1,
        \end{cases}
        \end{equation*}
        
        \State $\Phi_+ = \{\phi_k;\ k=1,\ \ldots,\ \lceil\frac{n}{2}\rceil\}$, where:
        \begin{equation*}\label{eq:phi1}
            \phi_1 = \begin{cases}
                0, & n \mbox{ even} \\
                -\frac{t_1}{2}, & n \mbox{ odd}
            \end{cases},\quad\quad \phi_{k} =  \phi_1 + \sum_{i=1}^{k-1} t_i, \quad\quad \phi_{n/2} + \tau = \phi_A(\tau)
        \end{equation*}
%        \begin{equation}
%            \phi_{k} =  \phi_1 + \sum_{i=1}^{k-1} t_i
%        \end{equation}
%        \begin{equation}
%            \phi_{n/2} + \tau = \phi_A(\tau)
%        \end{equation}
        and,
        \begin{equation*}\label{eq:tk}
            \phi_1 + \sum_{i=1}^{k} t_i = \phi_A(t_k), \quad\quad \nu_A(\tau) = 1
        \end{equation*}
        %\begin{equation}\label{eq:tau}
         %   \nu_A(\tau) = 1
       % \end{equation}
    \end{algorithmic}
\end{algorithm}
%\vspace{-0.5cm}

\includeFig{width=3.5in,keepaspectratio}{fig:n_8}{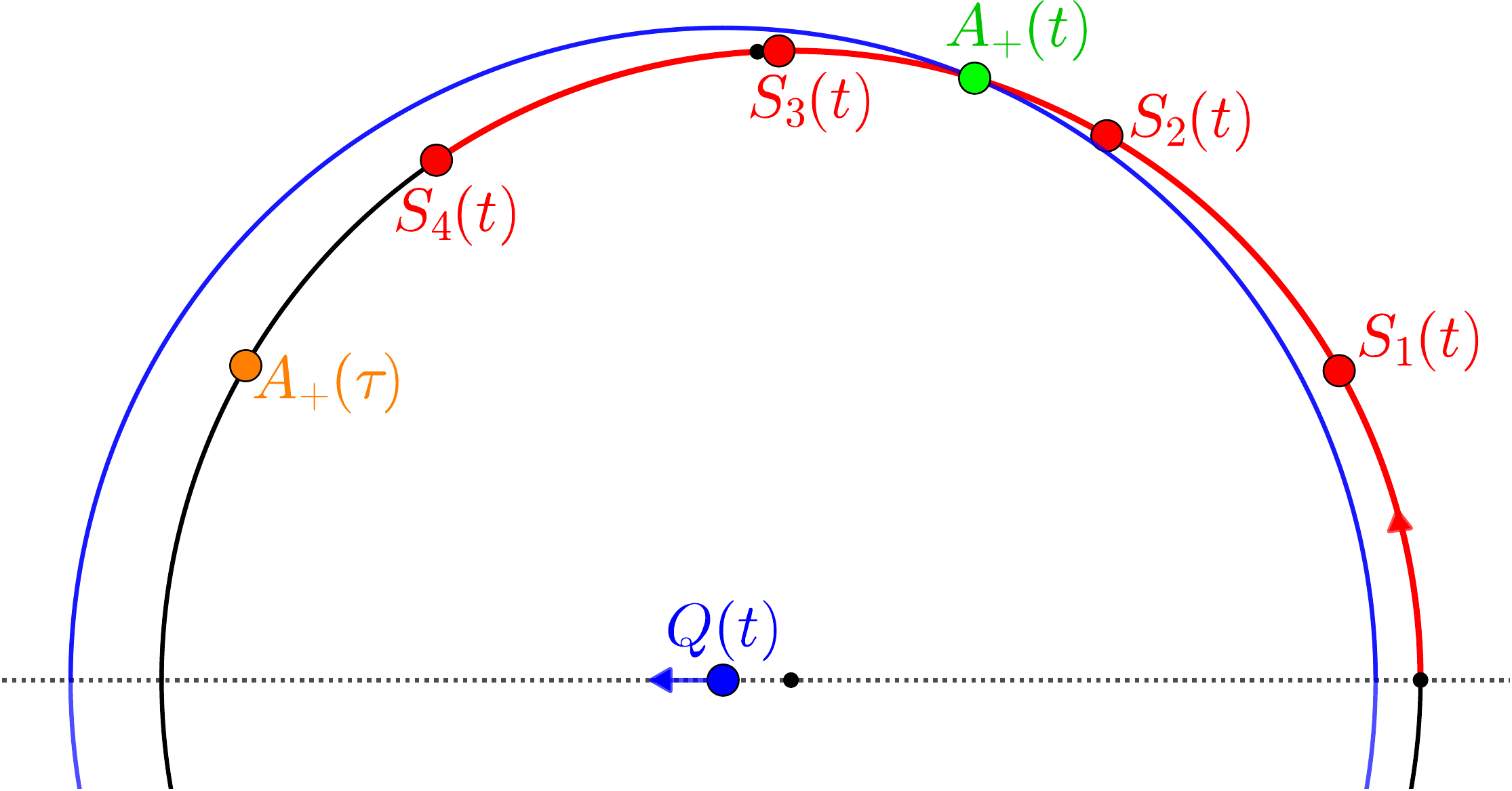}{Example configuration of Algorithm~\ref{alg:ns} when $n=8$. The configuration is only shown for the 4 servants on the upper half of the circle $\UC$. In this diagram all servants move counter-clockwise. The servants $S_1$ and $S_2$ have already finished their search and are located at the starting positions of the respective servants $S_2$ and $S_3$. The servant $S_3$ is just about to finish its search. The point $A_+(\tau)$ marks the location where the intercept $A_+(t)$ slows to a speed of 1. The servant $S_4$ will reach the point $A_+(\tau)$ at the exact moment the intercept does.}

%\includeFig{width=5in,keepaspectratio}{fig:asymptotic_evac_times}{figures/asymptotic_evac_times.pdf}{Evacuation times of Algorithm~\ref{alg:ns} for $n \in [2,\ 2^{18}]$. The upper bound of $2+4(\sqrt{2}-1)\frac{\pi}{n}$ and the lower bound of $1+\frac 2n \acoss{\frac{-2}{n}}+\sqrt{1-\frac{4}{n^2}}$ (see Theorem~\ref{thm:lb}) are provided for reference. In both plots the $x$-axis is logarithmic. In the plot on the right $\IT-2$ is plotted with a logarithmic $y$-axis to better see the asymptotic behaviour.}

\begin{proof} (Theorem~\ref{thm:ub})
    To simplify the algebra we will assume that $n$ is even. Algorithm~\ref{alg:ns} specifies that we choose the $t_k$ in order to satisfy $\sum_{i=1}^{k} t_i = \phi_A(t_k)$ where $\phi_A(t)$ is defined in \eqref{eq:phiA}. We note that each servant will be able to search for at least a time $\gamma$ since this marks the first time at which $\IC_Q(t)$ and $\UC$ intersect. This motivates us to define the primed time coordinate  $t' = t - \gamma$. In this primed coordinate the defining relation for the $t'_k$ is
    %\begin{equation*}
        $\sum_{i=1}^{k} t'_k = \phi_A(t'_k) - k\gamma$
    %\end{equation*}
    (where we assume that $\phi_A$ is properly redefined for the primed time coordinate). We are interested in an asymptotic limit and thus we make the following claim:
    \begin{claims}\label{clm:sum}
        When we take the limit in large $n$, the sum $\sum_{i=1}^{k} t'_i$ becomes a definite integral
        $\lim_{n \rightarrow \infty} \sum_{i=1}^{k} t'_i = \int_{0}^{\kappa} t'(u) du$
        where $\frac{\kappa}{n}$ is to be interpreted as the fractional servant number and $u$ is a dummy integration variable.
    \end{claims}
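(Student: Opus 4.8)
The plan is to read $\sum_{i=1}^{k} t'_i$ as a Riemann sum, with unit spacing in the integer servant index, for the integral of a smooth limiting profile $t'(u)$, and then to bound the sum-to-integral discrepancy by the Euler--Maclaurin formula, showing it is negligible against the order-$1$ value of the sum itself. The reason no factor of $n$ appears on the right-hand side is precisely that the index increment is $1$: each servant contributes one term, so once we set $t'(i) = t'_i$ the integral $\int_0^\kappa t'(u)\,du$ over the continuous servant index $u$ is directly approximated by $\sum_{i=1}^k t'_i$.

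First I would pin down the scale of the summands. Subtracting the defining relation $\sum_{i=1}^{k} t'_i = \phi_A(t'_k) - k\gamma$ from its $(k-1)$ instance gives the one-step recursion $t'_k = \phi_A(t'_k) - \phi_A(t'_{k-1}) - \gamma$ (with $\phi_A$ read in the primed coordinate). For fixed $\alpha$ the auxiliary parameter $\rho=\rho(n)$, fixed by $\nu_A(\tau)=1$ and $\phi_{n/2}+\tau=\phi_A(\tau)$, tends to $0$, and since the $n/2$ servants of $\IS_+$ must jointly cover an arc of length $\pi$, each $t'_k$ is of order $1/n$ uniformly in $k$. Treating the index as continuous, I would then define $t'(u)$ as the smooth interpolant solving the continuum version of the recursion; equivalently, differentiating the integral form $\int_0^\kappa t'(u)\,du = \phi_A(t'(\kappa)) - \kappa\gamma$ in $\kappa$ yields the ODE $t'(\kappa) = \phi_A'(t'(\kappa))\,\diff{t'}{\kappa} - \gamma$ that characterizes the profile. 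Writing $t'(u) = \frac{1}{n}\,g(u/n)$ for a fixed smooth $g$ on $[0,\tfrac{1}{2}]$ exhibits the crucial regularity: $t'(u) = \BO{1/n}$ and $\diff{t'}{u} = \BO{1/n^2}$, both uniform in $u$.

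With these bounds the conclusion is immediate from Euler--Maclaurin: the difference $\sum_{i=1}^{k} t'(i) - \int_0^{k} t'(u)\,du$ is controlled by boundary terms of size $\BO{t'(k)} = \BO{1/n}$ together with a remainder bounded by $\int_0^{k}\abs{\diff{t'}{u}}\,du$, which is $\BO{n}\cdot\BO{1/n^2} = \BO{1/n}$. Since both $\sum_{i=1}^{k} t'_i$ and $\int_0^{k} t'(u)\,du$ are of order $1$ while their difference is $\BO{1/n}$, the discrepancy vanishes as $n\to\infty$. This is exactly $\lim_{n\rightarrow\infty}\sum_{i=1}^{k} t'_i = \int_0^{\kappa} t'(u)\,du$, with $\kappa$ the continuous counterpart of $k$ so that $\kappa/n$ is the fractional servant number.

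The main obstacle is not the Euler--Maclaurin estimate but the step that legitimizes it: rigorously establishing existence, smoothness, and the uniform $\BO{1/n}$ and $\BO{1/n^2}$ bounds for the limiting profile $t'(u)$. The sequence $t'_k$ is defined only implicitly through the cumulative relation, and all of $\rho$, $\gamma$, and $\tau$ depend on $n$ and on one another, so before the Riemann-sum comparison can be applied uniformly one must show that these coupled parameters converge and that the discrete profile does not develop large local variation (for instance near $k=1$, where the intercept $A_+$ moves fastest and $t'_k$ is smallest). Verifying this well-posedness and the attendant derivative bounds is where the real work lies.
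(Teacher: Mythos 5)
Your proposal is correct and rests on the same core idea as the paper's proof: read $\sum_{i=1}^{k} t'_i$ as a discretization, with slowly varying summands, of an integral over a continuous servant index. The technical execution differs. The paper rescales to the fractional variable $f_k = k/n$, treats $t'_k = t'(f_k)$ as samples with mesh $\Delta f_i = \frac 1n$, writes the identity $\sum_{i=1}^{k} t'_i = n\sum_{f_i=0}^{f_k} t'(f_i)\,\Delta f_i$, declares the limit of $\sum_{f_i=0}^{f_k} t'(f_i)\,\Delta f_i$ to be the Riemann integral $\int_0^{f_k} t'(f)\,df$ by definition, and then substitutes $\kappa = n f_k$ to absorb the stray factor of $n$. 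No error estimate is made, and the hypothesis that actually drives the argument --- that the rescaled profile $n\,t'$ converges, as $n \to \infty$, to a fixed integrable function of $f$ --- is assumed silently. Your Euler--Maclaurin route stays in the unit-spaced index variable (which directly explains why no factor of $n$ appears), produces an explicit quantitative discrepancy bound, and makes the hidden hypothesis visible; your closing observation that the real work is the well-posedness and regularity of the implicitly defined profile identifies exactly the gap that the paper's own proof shares and does not resolve.

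One caveat on your stated rates: the ansatz $t'(u) = \frac{1}{n}\,g(u/n)$ with $g$ fixed, smooth and bounded on $[0, \frac{1}{2}]$, hence $t'_k = \BO{1/n}$ and $\diff{t'}{u} = \BO{1/n^2}$ uniformly in $k$, is too strong at the right endpoint. By the paper's Claim~2, $\tau = \BO{\rho^{1/3}} = \BO{n^{-1/3}}$, so the search times of the last servants are of order $n^{-1/3}$, not $n^{-1}$: the profile develops a boundary layer as $k/n \to \frac 12$, and $\frac 12$ is precisely the fraction the theorem's proof later uses (via $\kappa(\tau') = \frac n2$). Your conclusion still stands, because the $t'_k$ are increasing in $k$, so the boundary terms and the total variation entering Euler--Maclaurin are both $\BO{n^{-1/3}} = \LO{1}$; but the uniform $\BO{1/n}$ and $\BO{1/n^2}$ bounds should only be claimed for fractions $k/n$ bounded away from $\frac 12$. (Relatedly, $\alpha$ cannot be held fixed as $n \to \infty$: each of the $\lceil n/2 \rceil$ servants of $\IS_+$ searches for time at least $\gamma = \alpha + \frac{\rho}{2}$ while they jointly cover an arc of length $\pi$, which forces $\alpha = \BO{1/n}$, as the paper indeed enforces by setting $\alpha = a\frac{\pi}{n}$.)
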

        \begin{proof} %(Claim~\ref{clm:sum}) 
        Consider the sum $\sum_{i=1}^{k} t'_i$. Define $f_k = \frac{k}{n}$ as the fractional servant number and redefine $t'_k$ to be a function of $f_k$, i.e., $t'_k = t'(f_k)$. Also define $\Delta f = \Delta f_k = \frac{1}{n}$. Finally, defining $t'(0) = 0$ allows us to rewrite the sum as $\sum_{i=1}^{k} t'_i = n \sum_{f_i=0}^{f_k} t'(f_i) \Delta f_i$.
    Now, for a given constant fraction $f_k \leq \frac{1}{2}$ the bounds of the sum $\sum_{f_i=0}^{f_k} t'(f_i) \Delta f_i$ are constant. Furthermore, in the limit $n \rightarrow \infty$ the interval $\Delta f_i \rightarrow 0$. Thus, the limit,
    $\lim_{n \rightarrow \infty} \sum_{f_i=0}^{f_k} t'(f_i) \Delta f_i$
    is simply the definition of the Riemann integral of $t'(f)$ over the domain $f \in [0, f_k]$, i.e., 
    $\lim_{n \rightarrow \infty} \sum_{f_i=0}^{f_k} t'(f_i) \Delta f_i = \int_{0}^{f_k} t'(f) df$.
    By defining $\kappa = n f_k$ we get $\int_{0}^{f_k} t'(f) df = \frac{1}{n} \int_{0}^{\kappa} t'(u) du$ and this leads us to our desired result
    $\lim_{n \rightarrow \infty} \sum_{i=1}^{k} t'_i = \int_{0}^{\kappa} t'(u) du$.
    \qed \end{proof}
%    In the case that this claim is valid,
    Due to the Claim~\ref{clm:sum}, the asymptotic defining relation for $t'(\kappa)$ becomes an integral equation
    $\int_{0}^{\kappa} t'(u) du = \phi_A(t'(\kappa)) - \kappa \gamma$.
    Using the fundamental theorem of calculus we can rewrite this as a differential equation:
    $t'(\kappa) = \frac{d}{d\kappa}(\phi_A(t'(\kappa)) - \kappa\gamma) = \frac{d \phi_A(t'(\kappa))}{d\kappa} - \gamma$.
    Applying the chain rule we find that 
    $\frac{d \phi_A(t'(\kappa))}{d\kappa} = \frac{d \phi_A(t'(\kappa))}{dt'} \cdot \frac{d t'(\kappa)}{d\kappa}$.
    Observe that $\frac{d \phi_A(t'(\kappa))}{dt'}$ is simply the speed of the intercept $A_+$ and we can therefore write the differential equation for $t'(\kappa)$ as
    $\frac{dt'}{d\kappa} = \frac{t'+\gamma}{\nu_A(t'(\kappa))}$.
    This ordinary differential equation can easily be solved for $\kappa$ in terms of $t'$ by separation of variables. We find that 
    $\kappa(t') = \int_{0}^{t'} \frac{\nu_A(u)}{u+\gamma} du$.
    The equation for the speed $\nu_A$ is given in \eqref{eq:nuA}, which, in the primed time coordinate takes the form
    $\nu_A(t') = \frac{1}{(2t'+\rho)}\sqrt{\frac{\rho (\rho + 2)}{t'(1-1')}}$.
    Substituting this into the expression for $\kappa(t')$ yields 
    $\kappa(t') = \int_{0}^{t'} \frac{1}{(u+\gamma)(2u+\rho)}\sqrt{\frac{\rho (\rho + 2)}{u(1-u)}} du$.
    This integral has the closed form solution
    \begin{align*}
        \kappa(t') &= \frac{1}{\alpha} \left[ 2 \tan^{-1}\left(\frac{t'(2t'+\rho)}{\rho} \nu_A(t') \right) \right. \\
          & \left. - \sqrt{\frac{\rho(\rho+2)}{\gamma(\gamma+1)}} \tan^{-1}\left(t'(2t'+\rho) \sqrt{\frac{1+\gamma}{\gamma \rho(\rho+2)}}\nu_A(t') \right) \right].
    \end{align*}
    We require that the servant $S_{n/2}$ be coincident with the intercept $A_+$ at the time $\tau' = \tau-\gamma$ and this implies that we need $\kappa(\tau') = \frac{n}{2}$ or
    \begin{align*}
        \frac{n}{2} &= \frac{1}{\alpha} \left[ 2 \tan^{-1}\left(\frac{\tau'(2\tau'+\rho)}{\rho} \nu_A(\tau') \right)
        \right. \\ &- \left. \sqrt{\frac{\rho(\rho+2)}{\gamma(\gamma+1)}} \tan^{-1}\left(\frac{\tau'(2\tau'+\rho)}{\rho} \sqrt{\frac{\rho(\gamma+1)}{\gamma(\rho+2)}}\nu_A(\tau') \right) \right].
    \end{align*}
    If we set $\alpha = a\frac{\pi}{n}$ and note that, by definition, $\nu_A(\tau') = 1$, we can simplify the above to obtain 
    \begin{align*}
        \frac{\pi}{2} = \frac{1}{a} \left[ 2 \tan^{-1}\left(\frac{\tau'(2\tau'+\rho)}{\rho} \right)
        - \sqrt{\frac{\rho(\rho+2)}{\gamma(\gamma+1)}} \tan^{-1}\left(\frac{\tau'(2\tau'+\rho)}{\rho} \sqrt{\frac{\rho(\gamma+1)}{\gamma(\rho+2)}} \right)\right].
    \end{align*}
    Define $D(a, \rho)$ as the quantity
    \begin{align*}
        D(a, \rho) &= \frac{\pi}{2} - \frac{1}{a} \left[ 2 \tan^{-1}\left(\frac{\tau'(2\tau'+\rho)}{\rho} \right)
       \right. \\ &- \left. \sqrt{\frac{\rho(\rho+2)}{\gamma(\gamma+1)}} \tan^{-1}\left(\frac{\tau'(2\tau'+\rho)}{\rho} \sqrt{\frac{\rho(\gamma+1)}{\gamma(\rho+2)}} \right)\right]
    \end{align*}
    which we want to be zero. We now make the following claim:
    
    \begin{claims}\label{clm:tau}
        The asymptotic behaviour of $\tau$ is $\BO{\rho^{1/3}}$.
    \end{claims}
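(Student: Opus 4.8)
The plan is to extract from the defining condition $\nu_A(\tau')=1$ a single algebraic equation relating $\tau'$ and $\rho$, and then to perform a dominant-balance analysis as $\rho\to 0$ (equivalently $n\to\infty$). Using the primed form of the intercept speed (the primed version of \eqref{eq:nuA}), the condition $\nu_A(\tau')=1$ is equivalent, after squaring and clearing denominators, to
\begin{equation*}
    (2\tau'+\rho)^2\,\tau'(1-\tau') = \rho(\rho+2).
\end{equation*}
Notice that this relation involves only $\tau'$ and $\rho$ (neither $\alpha$ nor $\gamma$ enters), so it determines $\tau'$ as a function of $\rho$ alone. Since $A_+$ moves with speed $\nu_A\gg 1$ for $t'$ just above $0$ and $\nu_A$ decreases monotonically, $\tau'$ is the smallest positive root, and from the displayed equation $\tau'\to 0$ as $\rho\to 0$: if $\tau'$ stayed bounded away from $0$ the left-hand side could not vanish while the right-hand side does.

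Next I would rescale by the anticipated exponent. Substituting $\tau'=\rho^{1/3}w$ and dividing through by $\rho$, the equation becomes
\begin{equation*}
    \left(2w+\rho^{2/3}\right)^2 w\left(1-\rho^{1/3}w\right) = \rho+2.
\end{equation*}
As $\rho\to 0$ the left-hand side converges to $4w^3$ and the right-hand side to $2$, so the limiting equation is $4w^3=2$, whose unique positive root is $w_\ast=2^{-1/3}$. Because this root is simple (the derivative $12w^2$ is nonzero there), the implicit function theorem, or equivalently a continuity argument applied to the perturbed equation, guarantees that the relevant root $w(\rho)$ exists for small $\rho$ and satisfies $w(\rho)\to w_\ast$. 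Hence $\tau'=\rho^{1/3}w(\rho)\sim (\rho/2)^{1/3}$, which gives $\tau'=\BO{\rho^{1/3}}$. Finally, since $\tau=\tau'+\gamma$ with $\gamma=\alpha+\rho/2$, and $\gamma$ is of lower order than $\rho^{1/3}$ in the asymptotic regime of interest, the same bound $\tau=\BO{\rho^{1/3}}$ carries over.

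I expect the main obstacle to be the rigorous justification of the dominant balance rather than the heuristic balance itself: specifically, (i) confirming that the root being tracked is the physically correct one (the first, smaller crossing of $\nu_A=1$, rather than the spurious root near $\tau'=1$ introduced by squaring), and (ii) upgrading the formal $\rho\to 0$ limit $4w^3=2$ into a genuine convergence statement. Both difficulties are handled by the rescaling $\tau'=\rho^{1/3}w$ above together with the monotonicity of $\nu_A$ and the simple-root continuity argument, so that the exponent $1/3$ is established with leading constant $2^{-1/3}$.
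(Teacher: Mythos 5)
Your proof is correct and takes essentially the same approach as the paper: both reduce the condition $\nu_A(\tau')=1$ to the same algebraic equation --- your relation $(2\tau'+\rho)^2\,\tau'(1-\tau')=\rho(\rho+2)$ is exactly the paper's quartic $t''^4-(\rho+1)t''^3+\frac{\rho(\rho+2)}{4}(t''^2+1)=0$ rewritten in the variable $t''=t'+\rho/2$ --- and then extract $\tau\approx(\rho/2)^{1/3}$ by a dominant-balance argument as $\rho\to 0$. The only difference is presentational: the paper informally discards the terms $t''^4$, $t''^2$ and the $\rho$-corrections, whereas your rescaling $\tau'=\rho^{1/3}w$ together with the simple-root/implicit-function argument justifies that same limit rigorously.
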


    \begin{proof} %(Claim~\ref{clm:tau}): 
    We want to show that $\tau$ has asymptotic behaviour $\BO{\rho^{1/3}}$. One way that we can do this is by formally computing a Puiseux series of $\tau$ from which one would find that the first few terms in the expansion are
    $\tau = \left(\frac{\rho}{2}\right)^{1/3} + \frac{1}{3}\left(\frac{\rho}{2}\right)^{2/3} + \BO{\rho}$.
    Alternatively, we note that $\tau'$ solves the equation $\nu_A(\tau') = 1$, i.e.
    $\frac{1}{2t'+\rho}\sqrt{\frac{\rho (\rho + 2)}{t'(1-t')}} = 1$.
    Expanding the above we arrive at the quartic equation
    $t''^4 - (\rho+1)t''^3 + \frac{\rho(\rho+2)}{4}(t''^2 + 1) = 0$
    where $t'' = t'+\frac \rho 2 = t - \alpha$. As $n \rightarrow \infty$ we have $1+t''^2 \rightarrow 1$, $2+\rho \rightarrow 2$ and $t''^4 \rightarrow 0$. As $n$ gets large $\tau$ approaches the solution to
    $-t''^3 + \frac \rho 2 = 0$
    which also demonstrates that $\tau = \BO{\rho^{1/3}}$. 
    \qed \end{proof}

Using Claim~\ref{clm:tau}, we have that
%    Under the assumption of this claim observe that
    $\lim_{n\rightarrow \infty} \frac{\tau'(2\tau'+\rho)}{\rho} = \lim_{n\rightarrow \infty} \BO{\rho^{-1/3}} = \infty$
    and thus
    \begin{align*}
        \frac \pi 2 = \lim_{n\rightarrow \infty} \tan^{-1}\left(\frac{\tau'(2\tau'+\rho)}{\rho} \right)
         = \lim_{n\rightarrow \infty} \tan^{-1}\left(\frac{\tau'(2\tau'+\rho)}{\rho} \sqrt{\frac{\rho(\gamma+1)}{\gamma(\rho+2)}} \right).
    \end{align*}
    We can therefore write
    $\lim_{n\rightarrow \infty} D(a,\rho) = \frac{\pi}{a} \left(1 - \frac{a}{2} - \sqrt{\frac{\rho}{2\gamma}}\right)$.
    Now set $\rho = q\frac{\pi}{n}$ such that $\gamma = \alpha + \frac \rho 2 = \frac{\pi}{n}(a+\frac{q}{2})$. Using this notation we have
    $\lim_{n\rightarrow \infty} D(a,\rho) = \frac{\pi}{a} \left(1 - \frac{a}{2} - \sqrt{\frac{q}{2a+q}}\right)$.
    We want this limit to equal zero which implies that we need $1 - \frac{a}{2} - \sqrt{\frac{q}{2a+q}} = 0$ or $q = \frac{2(2-a)^2}{(4-a)}$.
    
    Now, to optimize the algorithm we need to minimize the evacuation time $\IT_p$. Since $\IT_p$ increases with $a$ we equivalently need to minimize $a+q = \frac{a^2-4a+8}{4-a}$. Taking the derivative of this with respect to $a$ and setting the result equal to zero gives us the optimal value of $a$ and $q$ to be $a = 2(2-\sqrt{2})$ and $q = 2(3\sqrt{2}-4)$. The asymptotic cost of the algorithm is therefore
    $\IT_p = 1+\alpha+\rho = 1 + 4(\sqrt{2}-1) \frac{\pi}{n}$.
    The overall evacuation time is then $\IT = 1+\IT_p$ which is the bound given in Theorem~\ref{thm:ub}. 

    We note that, in the case that $n$ is odd, the results of the proof will not change due to the fact that, as $n\rightarrow \infty$, we have $\phi_1 = -\frac{t_1}{2} \rightarrow 0$.  
    \qed \end{proof}

\section{Lower Bound}
\label{sec7}

In this section we develop a lower bound on the evacuation time of the queen. We first note that we can derive a naive lower bound of $2+\frac{\pi}{n+1}$ since each robot can travel with a maximum speed of one and we have $n+1$ robots in total. We will show that this can be improved:

\begin{theorem}\label{thm:lb}
    In any algorithm with $n\geq 4$ the queen cannot be evacuated in time less than
    $1 + \frac{2}{n}\cos^{-1}\left(\frac{-2}{n}\right) + \sqrt{1 - \frac{4}{n^2}}$.
    In the limit of large $n$ this bound approaches
    $2+\frac{\pi}{n}+\frac{2}{n^2}$.
\end{theorem}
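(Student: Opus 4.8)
The plan is to convert the evacuation requirement into a single covering constraint on the queen and then optimise it over time. Suppose some algorithm evacuates the queen in time $\IT$. Exactly as in Observation~\ref{obs:disk1}, at every time $t$ every undiscovered point of $\UC$ must lie within distance $\IT-t$ of the queen, i.e. inside the disk $\IC_Q(t)$ of radius $r:=\IT-t$ centred at $Q(t)$. Consequently the entire still-undiscovered portion of $\UC$ is contained in the single arc $\UC\cap\IC_Q(t)$, and the length of that arc can be no smaller than the amount of perimeter that remains unexplored at time $t$. I would make this inequality the engine of the proof and then select the time $t$ (equivalently the radius $r$) at which it is hardest to meet.

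Two ingredients feed the inequality. The first is a search-rate bound: since a searcher must spend time $1$ reaching $\UC$ and thereafter sweeps arc length at most $t-1$ by time $t$, the discovered arc has total measure at most $n(t-1)$, so at least $2\pi-n(t-1)$ of the perimeter is undiscovered. The second is a purely geometric bound on the covering arc: if $Q(t)$ sits at distance $\delta$ from the centre then $\UC\cap\IC_Q(t)$ is the arc $\{\phi:\cos(\phi-\phi_Q)\ge\frac{1+\delta^2-r^2}{2\delta}\}$, of length $2\acoss{\frac{1+\delta^2-r^2}{2\delta}}$; minimising the argument over $\delta$ (optimum at $\delta=\sqrt{1-r^2}$) shows this length is at most $2\asinn{r}$ for $r\le1$, while for $r\ge1$ a queen at the centre already sees all of $\UC$. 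Thus, wherever the queen actually is, the covering arc has length at most $2\asinn{r}$.

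Combining the two, for every $t$ with $r=\IT-t\le1$ we must have $2\asinn{r}\ge 2\pi-n(t-1)$, i.e. $\IT\ge 1+r+\frac{2\pi-2\asinn{r}}{n}$ for all $r\in[0,1]$. The theorem then follows by maximising the right-hand side: differentiating gives the optimum at $\sqrt{1-r^2}=\frac2n$, that is $r=\sqrt{1-4/n^2}$ and $\asinn{r}=\acoss{2/n}$, and substituting together with $\pi-\acoss{2/n}=\acoss{-2/n}$ yields $\IT\ge 1+\frac2n\acoss{\frac{-2}{n}}+\sqrt{1-\frac4{n^2}}$. Expanding $\acoss{-2/n}$ and $\sqrt{1-4/n^2}$ to second order in $1/n$ recovers the stated asymptotics $2+\frac\pi n+\frac2{n^2}$. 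The only bookkeeping here is to verify that the optimal $r$ lies in $(0,1)$ and that the binding time $t^{*}=\IT-\sqrt{1-4/n^2}$ exceeds $1$, so that the search-rate bound is the operative constraint.

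The main obstacle is justifying the coefficient $n$ rather than $n+1$, since the model permits the queen to search as well. The resolution is a timing argument. At the binding time the queen must sit at distance $\delta=\frac2n$ from the centre, since only from there can the disk of radius $r^{*}=\sqrt{1-4/n^2}$ cover the roughly half-circle $2\acoss{2/n}$ of undiscovered perimeter; but to have helped with the search it would have had to be on $\UC$ at distance $1$ at some earlier time no smaller than $1$. Returning from the perimeter to distance $2/n$ costs almost a full unit of time, whereas only about $\frac\pi n$ units elapse between reaching $\UC$ and the binding time, so for $n\ge4$ the queen simply cannot do both; hence it contributes nothing to the search near the critical moment and the discovered measure is indeed at most $n(t-1)$. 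Turning this intuition into a rigorous exclusion --- ruling out a queen that searches early and repositions in time, and pinning down the exact threshold $n\ge4$ --- is the delicate heart of the argument; everything else is the routine covering-plus-optimisation calculation above.
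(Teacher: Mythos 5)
Your covering constraint and the optimization built on it are sound, but they are not new relative to the paper: requiring the arc $\UC\cap\IC_Q(t)$, of length at most $2\asinn{\IT-t}$, to contain the at least $2\pi-n(t-1)$ of undiscovered perimeter is exactly the inequality $\IT\geq 1+x+\sinn{\frac{nx}{2}}$ of Lemma~\ref{lm:lb1} (set $x=t-1$), and your optimum $\sqrt{1-r^2}=\frac{2}{n}$ is the paper's choice $x=\frac{2}{n}\acoss{\frac{-2}{n}}$ in Lemma~\ref{lm:lb2}. The genuine gap is exactly where you flag it, and it cannot be deferred: nothing in your argument establishes the search-rate bound $n(t-1)$ rather than $(n+1)(t-1)$, and that coefficient is the only thing separating the claimed bound from the naive lower bound. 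As written, your proof proves Lemma~\ref{lm:lb2}, not Theorem~\ref{thm:lb}.

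The timing sketch you offer to close this gap fails for two concrete reasons. First, it is circular: the assertion that the queen ``must sit at distance $\frac{2}{n}$ from the centre'' at the binding time follows only if the covering constraint is tight there, and it is tight only if the discovered measure is exactly $n(t-1)$ --- i.e., only if the queen never searched, which is the statement to be proved. If the queen searches for time $s$, the undiscovered measure drops to $2\pi-n(t-1)-s$, the constraint goes slack, her location is no longer pinned, and your final bound degrades by $\frac{s}{n}$. Second, the numbers fail at $n=4$: the binding time is below $t_c=1+\frac{2}{n}\acoss{\frac{-2}{n}}=1+\frac{\pi}{3}\approx 2.047$, reaching the perimeter costs $1$, and returning from the perimeter to distance $\frac{2}{n}=\frac{1}{2}$ costs only $\frac{1}{2}$, so the queen has roughly $0.55$ units of slack in which she could search and still reposition; your comparison of ``almost a full unit'' against ``about $\frac{\pi}{n}$'' is purely asymptotic, while the theorem claims $n\geq 4$. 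The paper fills this hole with a strictly stronger statement (Observation~\ref{obs:disk2} together with Lemmas~\ref{lm:AB}, \ref{lm:fpm}, \ref{lm:distance} and \ref{lm:queen_loc}): the region $\IG$ in which the queen may lie --- the intersection of \emph{all} disks of radius $\IT-t$ centred on undiscovered points, not merely the non-emptiness of that intersection --- stays within distance $R(t)<1$ of the origin for every $t\leq t_c$ whenever $\IT<\IT_0$. Hence the queen can never even reach the perimeter before $t_c$, so her search contribution is exactly zero rather than merely small; ruling her off $\UC$ at the single binding instant, as your sketch attempts, is not enough.
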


The outline of the proof is as follows: we first demonstrate that the lower bound holds for any algorithm in which the queen does not participate in searching for the exit before some critical time. We will then show that the queen is not able to participate in the search for the exit before this critical time. 
%Due to lack of space we present the proofs of a few of the more straightforward and geometric lemmas in the appendix. 
We begin with a lemma first given in \cite{CGGKMP} which is reproduced here for convenience:

\begin{lemma}\label{lm:chord}
    Consider a perimeter of a disk whose subset of total length $u + \epsilon > 0$ has not been explored for some $\epsilon > 0$ and $\pi \geq u > 0$. Then there exist two unexplored boundary points between which the distance along the perimeter is at least $u$.
\end{lemma}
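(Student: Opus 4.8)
The plan is to prove the contrapositive: if every pair of unexplored points is at perimeter-distance strictly less than $u$, then the unexplored set has total length at most $u$. Since the hypothesis of the lemma furnishes an unexplored set of length $u+\epsilon > u$, this immediately forces the existence of the desired far-apart pair. Throughout I would work on the unit circle parametrized by arc length and identified with $(-\pi,\pi]$, so that the perimeter distance is $d(\theta_1,\theta_2)=\min\left(\abs{\theta_1-\theta_2},\,2\pi-\abs{\theta_1-\theta_2}\right)$, a quantity lying in $[0,\pi]$.

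First I would normalize the picture. Let $U$ denote the unexplored set, with $\abs{U}=u+\epsilon$, and assume for contradiction that $d(x,y)<u$ for all $x,y\in U$. Pick any basepoint $x_0\in U$ and rotate so that $x_0=0$. Since $d(0,y)<u$ for every $y\in U$ and $u\le\pi$ (so that the distance from $0$ coincides with $\abs{\,\cdot\,}$ with no wraparound), every point of $U$ satisfies $\abs{y}<u$; that is, $U\subseteq(-u,u)$, a genuine arc of length $2u$.

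The heart of the argument is a pairing. For each $s\in(0,u)$ I would pair the point $s$ with the point $s-u\in(-u,0)$. Their angular separation is exactly $u$, and because $u\le\pi$ their perimeter distance is $\min(u,2\pi-u)=u\ge u$; hence, under the contradiction hypothesis, at most one of $s$ and $s-u$ can belong to $U$. As $s$ sweeps $(0,u)$, these two-element pairs partition $(-u,u)\setminus\{0\}$, so integrating the indicator $\mathbf 1_U$ yields $\abs{U}=\int_0^u\left(\mathbf 1_U(s)+\mathbf 1_U(s-u)\right)\,ds\le\int_0^u 1\,ds=u$, the inequality holding because the integrand is at most $1$ almost everywhere. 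This contradicts $\abs{U}=u+\epsilon>u$, which completes the contrapositive and hence the lemma.

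The step I expect to require the most care is the wraparound bookkeeping on the circle, and the whole argument hinges on the hypothesis $u\le\pi$. This bound is exactly what guarantees both that the set $\{\,\abs{y}<u\,\}$ around the basepoint is a true arc of length $2u$ rather than wrapping back onto itself, and that each paired separation of magnitude $u$ genuinely realizes perimeter distance $u$ rather than $2\pi-u$; dropping $u\le\pi$ makes the pairing collapse, consistent with the lemma's stated restriction $\pi\ge u$. A secondary detail worth stating explicitly is that single points (such as the basepoint $0$) carry zero length, so removing $\{0\}$ does not affect the measure identity used for $\abs{U}$.
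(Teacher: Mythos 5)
Your proof is correct. One thing to be aware of: the paper itself gives no proof of this lemma---it is reproduced from \cite{CGGKMP} purely for convenience---so there is no in-paper argument to compare against; your write-up is in effect supplying the missing proof. The argument is sound at every step: the localization $U\subseteq(-u,u)$ around a basepoint uses $u\le\pi$ exactly where needed (the wraparound alternative $2\pi-\abs{y}<u$ would force $\abs{y}>\pi$, which is impossible), the pairs $\{s,\,s-u\}$ for $s\in(0,u)$ are pairwise disjoint, partition $(-u,u)\setminus\{0\}$, and realize perimeter distance exactly $u$ (again because $u\le\pi$), and the indicator-integral identity then gives $\abs{U}\le u$, contradicting $\abs{U}=u+\epsilon$. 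This pigeonhole-on-measure argument is essentially the standard one (and the one used in the cited source, phrased there with a shifted copy of $U$ rather than an integral of indicators), so you have not only filled the gap but done so along the expected lines.
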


In the next two lemmas we demonstrate that the lower bound holds if the queen does not participate in the search.

\begin{lemma}\label{lm:lb1}
    For $n \geq 2$, any $x$ satisfying $\frac{\pi}{n} \leq x < \frac{2\pi}{n}$, and any evacuation algorithm in which the queen does not participate in searching for the exit before the time $1+x$, it takes time at least $1+x+\sinn{\frac{nx}{2}}$ to evacuate the queen.
\end{lemma}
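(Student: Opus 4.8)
The plan is to run an adversary argument at the single fixed time $1+x$ and to show that, whatever the queen has done up to that moment, she is still forced to travel a Euclidean distance of at least $\sinn{\frac{nx}{2}}$ to reach the exit. First I would bound how much of the perimeter can have been discovered by time $1+x$. Since every robot starts at the centre and moves with speed at most one, no servant can reach $\UC$ before time $1$; hence in the interval $[1,\,1+x]$ each of the $n$ servants spends time at most $x$ on the perimeter and, moving at unit speed, sweeps an arc of length at most $x$. Because the queen does not search before time $1+x$ by hypothesis, the total discovered arc length is at most $nx$, so the undiscovered set has measure at least $2\pi - nx$. The constraints $\frac{\pi}{n} \le x < \frac{2\pi}{n}$ are precisely what make this useful: they give $0 < 2\pi - nx \le \pi$, the regime in which Lemma~\ref{lm:chord} applies and in which the chord-versus-arc estimate below is monotone.

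Next I would invoke Lemma~\ref{lm:chord} with $u = 2\pi - nx$ to extract two still-undiscovered boundary points $P_1,P_2$ whose separation measured along the perimeter is at least $2\pi - nx$. Since this arc length does not exceed $\pi$, the Euclidean chord satisfies $\norm{P_1 - P_2} \ge 2\sinn{\frac{2\pi - nx}{2}} = 2\sinn{\frac{nx}{2}}$, where the last equality uses $\sinn{\pi - \theta} = \sinn{\theta}$. There is one degenerate case: if the discovered length is exactly $nx$ (every servant sweeping the perimeter at full speed with no overlap), then Lemma~\ref{lm:chord} cannot be applied with a positive slack $\epsilon$; there I would instead apply it with $u = 2\pi - nx - \delta$ and let $\delta \to 0$, using continuity of $\sin$ to recover the same bound.

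The heart of the argument is then a standard indistinguishability step. Let $Q^\ast = Q(1+x)$ be the queen's position at time $1+x$. Because neither $P_1$ nor $P_2$ has been visited by any robot up to time $1+x$, placing the exit at either point reveals no information before $1+x$, so the robots' trajectories up to that instant, and in particular the point $Q^\ast$, are identical in the two scenarios ``exit $= P_1$'' and ``exit $= P_2$''. The adversary chooses the exit to be whichever of $P_1,P_2$ is farther from $Q^\ast$; by the triangle inequality $\max\{\norm{Q^\ast - P_1},\,\norm{Q^\ast - P_2}\} \ge \tfrac12\norm{P_1 - P_2} \ge \sinn{\frac{nx}{2}}$. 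Since the queen occupies $Q^\ast$ at time $1+x$ and moves at speed at most one, she cannot reach this exit before time $1 + x + \sinn{\frac{nx}{2}}$, no matter when a servant broadcasts its location. This establishes the claimed lower bound.

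I expect the main obstacle to be the rigorous formalisation of this indistinguishability step: one must argue carefully that the queen's configuration at time $1+x$ is a genuinely single, exit-independent point, so that at least one of the two far-apart candidates is necessarily at distance $\ge \sinn{\frac{nx}{2}}$ and the adversary's choice is legitimate. The accompanying boundary case in the application of Lemma~\ref{lm:chord} requires the small limiting argument noted above, while the remaining geometric estimates (the $nx$ bound on discovered arc, and the chord length) are routine.
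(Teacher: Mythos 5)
Your proof is correct and follows essentially the same route as the paper's: bound the discovered arc by $nx$, apply Lemma~\ref{lm:chord} with an $\epsilon$-limiting step, and let the adversary place the exit at the chord endpoint farther from the queen, forcing at least an additional $\sinn{\frac{nx}{2}}$ of travel. Your explicit indistinguishability and triangle-inequality argument is simply a more detailed rendering of what the paper compresses into ``an adversary may place the exit at either endpoint of this chord.''
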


\begin{proof}
    Consider an algorithm $\Alg$ with evacuation time $\IT$ and with $n$ servants. Then, at the time $t = 1+x$, the total length of perimeter that the robots have explored is at most $nx \geq \pi$ (since each robot may search at a maximum speed of one, the queen does not search by assumption, and the servants need at least a unit of time to reach the perimeter). Thus, by Lemma~\ref{lm:chord}, there exists two unexplored points on the perimeter of $\UC$ whose distance along the perimeter is at least $2\pi-nx-\epsilon$ for any $\eps>0$. The chord connecting these points has length at least $2\sinn{\pi - \frac{nx}{2}-\frac{\eps}{2}}$ and an adversary may place the exit at either endpoint of this chord. The queen will therefore take at least $\sinn{\pi - \frac{nx}{2}-\frac{\eps}{2}}$ more time to evacuate and the total evacuation time will be at least $1+x+\sinn{\pi - \frac{nx}{2}-\frac{\eps}{2}}$. As this is true for any $\eps > 0$ taking the limit $\eps \rightarrow 0$ we obtain $\IT \geq 1+x+\sinn{\pi-\frac{nx}{2}} = 1+x+\sinn{\frac{nx}{2}}$.
\qed \end{proof}

\begin{lemma}\label{lm:lb2}
    For any $n \geq 2$ and any evacuation algorithm in which the queen does not participate in searching for the exit before the time $t = 1+\frac{2}{n}\acoss{\frac{-2}{n}}$ it takes time at least $1+\frac{2}{n}\acoss{\frac{-2}{n}}+\sqrt{1-\frac{4}{n^2}}$ to evacuate the queen.
\end{lemma}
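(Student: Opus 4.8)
The plan is to read off Lemma~\ref{lm:lb2} as the \emph{optimal} specialization of Lemma~\ref{lm:lb1}. Lemma~\ref{lm:lb1} gives, for every admissible $x$, the lower bound $f(x) := 1+x+\sinn{\frac{nx}{2}}$ on the evacuation time, valid whenever the queen does not search before time $1+x$. Since every admissible $x$ yields a valid lower bound, the strongest such bound comes from choosing $x$ to maximize $f$ over $\left[\frac{\pi}{n},\frac{2\pi}{n}\right)$. I would therefore locate the maximizer $x^\ast$, observe that the threshold $1+x^\ast$ is exactly the time named in the statement, and evaluate $f(x^\ast)$.

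First I would differentiate, obtaining $f'(x) = 1 + \frac{n}{2}\coss{\frac{nx}{2}}$. Setting $f'(x)=0$ gives $\coss{\frac{nx}{2}} = -\frac{2}{n}$, so the unique interior critical point is $x^\ast = \frac{2}{n}\acoss{\frac{-2}{n}}$. As $x$ ranges over $\left[\frac{\pi}{n},\frac{2\pi}{n}\right)$ the argument $\frac{nx}{2}$ ranges over $\left[\frac{\pi}{2},\pi\right)$, on which $\coss{\frac{nx}{2}}\in(-1,0]$; hence for $n>2$ the equation $\coss{\frac{nx}{2}}=-\frac{2}{n}$ is solvable in the interval, placing $x^\ast$ in the admissible range. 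Since $f''(x) = -\frac{n^2}{4}\sinn{\frac{nx}{2}} < 0$ there (the sine is positive), $f$ is strictly concave, so $x^\ast$ is the global maximizer and beats the endpoint values $f\!\left(\frac{\pi}{n}\right) = 2+\frac{\pi}{n}$ and $f\!\left(\frac{2\pi}{n}\right)=1+\frac{2\pi}{n}$.

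It remains to evaluate the bound at $x^\ast$. Since $\frac{nx^\ast}{2}=\acoss{\frac{-2}{n}}\in\left[\frac{\pi}{2},\pi\right)$ the sine is nonnegative, and using $\coss{\frac{nx^\ast}{2}}=-\frac{2}{n}$ we get $\sinn{\frac{nx^\ast}{2}} = \sqrt{1-\frac{4}{n^2}}$. Substituting into $f$ gives $f(x^\ast) = 1+\frac{2}{n}\acoss{\frac{-2}{n}}+\sqrt{1-\frac{4}{n^2}}$, which is precisely the claimed bound; invoking Lemma~\ref{lm:lb1} at $x = x^\ast$ then completes the argument.

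The only delicate point---and the closest thing to an obstacle---is the boundary case $n=2$, where $x^\ast=\pi$ coincides with the \emph{excluded} right endpoint $\frac{2\pi}{n}$, so Lemma~\ref{lm:lb1} cannot be applied directly at $x^\ast$. I would handle this by a limiting argument: Lemma~\ref{lm:lb1} gives $\IT \geq f(x)$ for every $x<\pi$, and letting $x\to\pi^-$ yields $\IT\geq f(\pi)=1+\pi$ by continuity of $f$, which agrees with $1+\frac{2}{n}\acoss{\frac{-2}{n}}+\sqrt{1-\frac{4}{n^2}}$ at $n=2$. For all $n\geq 3$ the maximizer lies strictly inside the interval and no such care is needed.
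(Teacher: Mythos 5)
Your proof is correct and follows essentially the same route as the paper: maximize $f(x)=1+x+\sinn{\frac{nx}{2}}$ over the admissible interval, identify the critical point $x^\ast=\frac{2}{n}\acoss{\frac{-2}{n}}$, and invoke Lemma~\ref{lm:lb1} there. Your limiting argument for $n=2$ is actually a small improvement on the paper, whose proof asserts $\frac{\pi}{n} \leq x^\ast < \frac{2\pi}{n}$ even though at $n=2$ the upper inequality holds only with equality, a boundary case the paper silently glosses over.
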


\begin{proof} 
    Set $f(x) = 1+x+\sinn{\frac{nx}{2}}$. The maximum value of $f(x)$ occurs when $\diff{f}{x}=0$ or when $x = \frac{2}{n}\acoss{\frac{-2}{n}}$. Since $\frac{\pi}{n} \leq \frac{2}{n}\acoss{\frac{-2}{n}} < \frac{2\pi}{n}$ we can invoke Lemma~\ref{lm:lb1} to get a lower bound on the evacuation time of $\IT \geq 1 + \frac{2}{n}\acoss{\frac{-2}{n}} + \sinn{\acoss{\frac{-2}{n}}} = 1+\frac{2}{n}\acoss{\frac{-2}{n}}+\sqrt{1-\frac{4}{n^2}}$, provided that the queen does not search before the time $t = 1+\frac{2}{n}\acoss{\frac{-2}{n}}$.
\qed \end{proof}

We will now demonstrate that the queen is not able to search before the time $1+\frac{2}{n}\acoss{\frac{-2}{n}}$. This will be the goal of the next four lemmas and the following simple observation

\begin{observation}\label{obs:disk2}
    If the queen is to achieve an evacuation time of $\IT$, then, for any time $t \leq \IT$, she must remain in the region of intersection of all disks centered on the undiscovered points of $\UC$ with radii $\IT-t$.
\end{observation}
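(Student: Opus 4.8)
The plan is to prove this by an adversarial placement of the exit, in direct parallel with the reasoning behind Observation~\ref{obs:disk1}. In fact the statement is just the geometric dual of that earlier observation: saying that every undiscovered point lies within distance $\IT-t$ of the queen is the same as saying the queen lies within distance $\IT-t$ of every undiscovered point, and the latter is precisely the condition that $Q(t)$ belongs to the intersection of all disks of radius $\IT-t$ centred on the undiscovered points. So I would frame the proof around establishing, for every undiscovered point, a single distance inequality.

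First I would fix a time $t \leq \IT$ and an arbitrary point $p \in \UC$ that is still undiscovered at time $t$, and then consider the scenario in which the exit is located exactly at $p$. The key point is that, since no robot has visited $p$ up to time $t$, in this scenario the exit has not yet been found by time $t$, so every robot — and in particular the queen — is still on its default trajectory; hence the queen occupies position $Q(t)$ at time $t$ regardless of whether the true exit is at $p$ or somewhere else. Next, because the queen moves with speed at most one, travelling from $Q(t)$ to $p$ takes time at least $\norm{Q(t)-p}$, so her arrival at the exit occurs no earlier than $t + \norm{Q(t)-p}$. For the algorithm to have evacuation time $\IT$ this must be at most $\IT$, which yields $\norm{Q(t)-p} \leq \IT-t$, i.e. $Q(t)$ lies in the disk centred on $p$ with radius $\IT-t$. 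Letting $p$ range over all undiscovered points then forces $Q(t)$ into the intersection of all such disks, which is the claim.

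The main obstacle — the one step I would take care to state explicitly rather than gloss over — is the independence of $Q(t)$ from the hypothetical exit location. This is exactly what makes the adversary argument legitimate: the queen cannot condition her position at time $t$ on information she does not yet hold, and since $p$ is undiscovered she has received no broadcast distinguishing ``exit at $p$'' from any other placement. Everything past this point is just the speed bound together with the triangle inequality, so I expect no further difficulty.
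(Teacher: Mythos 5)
Your proof is correct and takes the same approach the paper leaves implicit: Observation~\ref{obs:disk2} is stated there without proof, and your adversarial placement of the exit at an arbitrary undiscovered point, combined with the unit-speed bound, is exactly the standard justification (the geometric dual of Observation~\ref{obs:disk1}). The step you rightly make explicit --- that $Q(t)$ cannot depend on which undiscovered point hosts the exit, since no distinguishing broadcast has occurred --- is the only subtlety, and you handle it properly.
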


\begin{lemma}\label{lm:AB}
    Consider any two points $A$ and $B$ on the unit circle connected by a chord of length $\delta$. Define the circles $\IC_A$ and $\IC_B$ as the circles centered on $A$ and $B$ with radii $r$. Then, if $r > \frac \delta 2$, the circles intersect at two points $C$ and $D$ at distances $\sqrt{r^2 - \frac{1}{4}\delta^2} \pm \sqrt{1 - \frac{1}{4}\delta^2}$ from the origin.
\end{lemma}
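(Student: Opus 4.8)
The plan is to exploit the reflective symmetry of the configuration about the perpendicular bisector of the chord $AB$ and then read off the two distances by two applications of the Pythagorean theorem. No deep idea is needed; the work is entirely in choosing coordinates adapted to the symmetry and being careful with one sign.

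First I would set up the symmetry. Let $M$ be the midpoint of the chord $AB$. Since $\abs{OA}=\abs{OB}=1$, the origin $O$ is equidistant from $A$ and $B$ and hence lies on the perpendicular bisector of $AB$, which is the line through $M$ perpendicular to $AB$; call this line $\ell$. Applying the Pythagorean theorem to the right triangle $OMA$, whose hypotenuse is $\abs{OA}=1$ and one leg is $\abs{MA}=\delta/2$, gives $\abs{OM}=\sqrt{1-\frac14\delta^2}$. Next I would locate the intersection points. Any point equidistant from $A$ and $B$ lies on $\ell$, so both intersection points $C$ and $D$ of $\IC_A$ and $\IC_B$ lie on $\ell$. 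Applying the Pythagorean theorem to the right triangle $AMC$, with hypotenuse $\abs{AC}=r$ and leg $\abs{AM}=\delta/2$, gives $\abs{MC}=\abs{MD}=\sqrt{r^2-\frac14\delta^2}$. This is real and nonzero exactly when $r>\delta/2$, which is the hypothesis guaranteeing that the two circles meet in two distinct points, placed symmetrically on either side of $M$ along $\ell$.

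Finally I would combine the two computations. Since $O$, $M$, $C$, $D$ all lie on the single line $\ell$, with $M$ at distance $\sqrt{1-\frac14\delta^2}$ from $O$ and $C,D$ at distance $\sqrt{r^2-\frac14\delta^2}$ on opposite sides of $M$, the signed positions of $C$ and $D$ relative to $O$ are $\sqrt{1-\frac14\delta^2}\pm\sqrt{r^2-\frac14\delta^2}$, and hence the distances from the origin are $\sqrt{r^2-\frac14\delta^2}\pm\sqrt{1-\frac14\delta^2}$, as claimed. The only delicate point — and the nearest thing to an obstacle — is the sign bookkeeping for the nearer intersection: the Euclidean distance is really $\bigl\lvert\sqrt{r^2-\frac14\delta^2}-\sqrt{1-\frac14\delta^2}\bigr\rvert$, so when $r<1$ the $\sqrt{r^2-\frac14\delta^2}-\sqrt{1-\frac14\delta^2}$ expression is negative and must be read as a signed displacement along $\ell$ rather than a literal distance. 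In the intended application $r=\IT-t$ is large enough that both expressions are nonnegative, so I would state this as a sign convention and not belabor it.
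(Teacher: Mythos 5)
Your proof is correct and follows essentially the same route as the paper's: both place the midpoint of $AB$ on the perpendicular bisector (which passes through the origin since $\abs{OA}=\abs{OB}=1$), compute $\sqrt{1-\frac{1}{4}\delta^2}$ and $\sqrt{r^2-\frac{1}{4}\delta^2}$ by the Pythagorean theorem, and combine the two lengths along that line. Your explicit handling of the sign of $\sqrt{r^2-\frac{1}{4}\delta^2}-\sqrt{1-\frac{1}{4}\delta^2}$ when $r<1$ is a detail the paper glosses over (it simply writes $\abs{OD}=\abs{ED}-\abs{OE}$), but it does not change the argument.
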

%\iffalse
\begin{proof}    
    Assume that $r > \frac 12 \delta$. Set $C$ and $D$ as the intersection points of $\IC_A$ and $\IC_B$, and set $E$ as the midpoint of $A$ and $B$. Refer to Figure~\ref{fig:AB} for a setup of the proof.
  
    \includeFig{width=3in,keepaspectratio}{fig:AB}{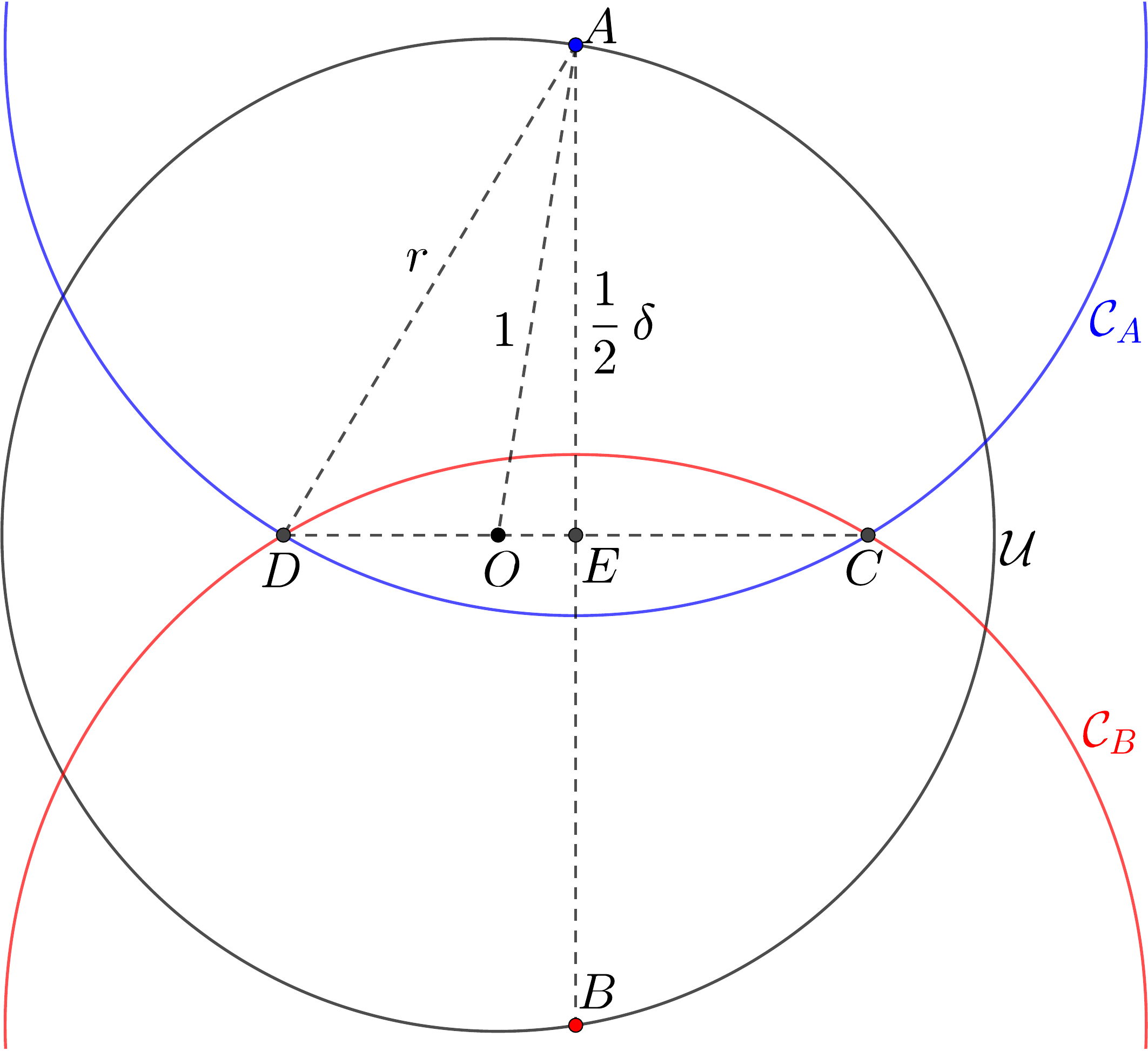}{Setup for the proof of Lemma~\ref{lm:AB}.}

    Since $\IC_A$ and $\IC_B$ have the same radius the points $C$ and $D$ are each separately equidistant to $A$ and $B$ and they will therefore lie on the perpendicular bisector of $A$ and $B$. Since $A$ and $B$ lie on the unit circle this bisector will pass through the origin. Referring to Figure~\ref{fig:AB} it is therefore clear that\footnote{$\abs{AB}$ represents the Euclidean distance between two points $A$ and $B$.} $|EC| = |ED| = \sqrt{r^2 - \frac 14 \delta^2}$ and $|OE| = \sqrt{1 - \frac 14 \delta^2}$. Since $|OC| = |OE| + |EC|$ and $|OD| = |ED| - |OE|$ we get the desired result.
\qed \end{proof}
%\fi
\begin{lemma}\label{lm:fpm}
    For a given $r>0$ define the functions
    $f_{\pm}(x) = \frac{1}{2}\sqrt{4r^2 - x^2} \pm \frac 12 \sqrt{4 - x^2}$.
    Then, for $0 \leq x \leq \min\{2,\ 2r\}$ $f_+$ is a decreasing function of $x$ and $f_-$ is an increasing function of $x$ if $r > 1$ otherwise it is decreasing.
\end{lemma}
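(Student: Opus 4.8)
The plan is to prove both monotonicity claims by direct differentiation followed by a sign analysis of the derivative on the stated domain; this is elementary calculus and no serious obstacle is anticipated. First I would observe that the constraint $0 \leq x \leq \min\{2,\ 2r\}$ is exactly what is needed to guarantee that both radicands $4r^2 - x^2$ and $4 - x^2$ are non-negative (so that $f_\pm$ are real-valued), and moreover that both radicals are \emph{strictly} positive on the open interval $0 < x < \min\{2,\ 2r\}$. This is the only place the hypothesis on the domain is used.

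Next I would compute the derivatives term by term. Using $\frac{d}{dx}\sqrt{4c^2 - x^2} = \frac{-x}{\sqrt{4c^2 - x^2}}$ for $c \in \{r,\ 1\}$, one obtains the compact combined form
\begin{equation*}
    f_\pm'(x) = -\frac{x}{2}\left(\frac{1}{\sqrt{4r^2 - x^2}} \pm \frac{1}{\sqrt{4 - x^2}}\right).
\end{equation*}
For $f_+$ the bracketed expression is a sum of two strictly positive terms, so $f_+'(x) < 0$ for every $x$ in the open interior; hence $f_+$ is decreasing, regardless of the value of $r$. For $f_-$ the sign of the derivative is governed by the bracket $\frac{1}{\sqrt{4r^2 - x^2}} - \frac{1}{\sqrt{4 - x^2}}$, which I would simplify by noting that, since the reciprocal of a positive quantity reverses inequalities, this bracket is negative precisely when $\sqrt{4r^2 - x^2} > \sqrt{4 - x^2}$, i.e.\ when $4r^2 > 4$, i.e.\ when $r > 1$. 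Consequently, for $r > 1$ we get $f_-'(x) > 0$ (increasing), while for $r < 1$ we get $f_-'(x) < 0$ (decreasing), matching the statement (the borderline $r = 1$ makes $f_- \equiv 0$).

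The only point requiring mild care — rather than a genuine obstacle — is the behaviour at the endpoints of the interval, where one of the radicals may vanish and the derivative is then unbounded. I would handle this by establishing strict monotonicity on the open interval from the sign of $f_\pm'$ above, and then extending it to the closed interval $[0,\ \min\{2,\ 2r\}]$ using continuity of $f_\pm$. This yields the claimed monotonicity on the full stated domain.
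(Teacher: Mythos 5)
Your proof is correct and follows essentially the same route as the paper's: compute $f_\pm'(x) = -\frac{x}{2}\left(\frac{1}{\sqrt{4r^2-x^2}} \pm \frac{1}{\sqrt{4-x^2}}\right)$ and read off the signs, with the $f_-$ case reducing to comparing $r$ with $1$. Your additional remarks on the endpoint behaviour and the degenerate case $r=1$ are minor refinements the paper omits, but the core argument is identical.
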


%\iffalse
\begin{proof}
    Assume that $x$ is in the interval $0 \leq x \leq \min\{2,\ 2r\}$. The rate of change of $f_{\pm}$ with $x$ is
    $\frac{df_\pm}{dt} = \frac{-x}{2}\left(\frac{1}{\sqrt{4r^2-x^2}} \pm \frac{1}{\sqrt{4-x^2}}\right)$.
    For $x$ in the given interval it is clear from this expression that $\frac{df_+}{dt} < 0$. On the other hand, $\frac{df_-}{dt}<0$ when
    $\frac{1}{\sqrt{4r^2-x^2}} - \frac{1}{\sqrt{4-x^2}} > 0$.
    It is not hard to see that this occurs when $r < 1$.
\qed \end{proof}
%\fi

\begin{lemma}\label{lm:distance}
    Consider any $r >0$ and assume that the unexplored subset of $\UC$ has total length $\phi$. Define $\ID_P$ as the disk centered on an undiscovered point $P \in \UC$ with radius $r$ and define $\IG$ as the region of intersection of all such disks. Then, if $r\geq\sinn{\frac{\phi}{2}}$, $\IG$ is completely contained inside of a disk centered on the origin with radius
    $R = \sqrt{r^2 - \sin^2\left(\frac{\phi}{2}\right)} + \coss{\frac \phi 2}$.
    If $r < \sin{\frac{\phi}{2}}$ then $\IG = \emptyset$.
\end{lemma}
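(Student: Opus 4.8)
The plan is to replace the intersection of all the disks by the intersection of just two of them, chosen as far apart as possible, and then to bound that two-disk lens by an explicit disk about the origin using the machinery of Lemmas~\ref{lm:chord}, \ref{lm:AB}, and~\ref{lm:fpm}.

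First I would observe that $\IG = \bigcap_{P}\ID_P$ is contained in $\ID_A \cap \ID_B$ for \emph{any} pair of undiscovered points $A, B$, so it suffices to exhibit one convenient pair and bound the resulting lens. To produce such a pair I would invoke Lemma~\ref{lm:chord}: since the unexplored subset has total length $\phi$, for every $\eps > 0$ there are two undiscovered points whose perimeter distance is at least $\phi - \eps$, hence whose chord length is at least $\chord{\phi-\eps}$; letting $\eps \to 0$ yields a pair $A, B$ whose chord $\delta = \abs{AB}$ satisfies $\delta \geq \chord{\phi}$.

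The heart of the argument is to show that $\ID_A \cap \ID_B$ lies inside the disk of radius $R$ about the origin. I would place $A$ and $B$ symmetrically about the $x$-axis, so that their midpoint $E$ lies on the positive $x$-axis at distance $\abs{OE} = \sqrt{1 - \delta^2/4}$ from the origin. For any $X \in \ID_A \cap \ID_B$ I would apply the median (parallelogram) identity $\abs{XA}^2 + \abs{XB}^2 = 2\abs{XE}^2 + \tfrac12 \abs{AB}^2$ together with $\abs{XA}, \abs{XB} \leq r$ to deduce $\abs{XE} \leq \sqrt{r^2 - \delta^2/4}$; the triangle inequality $\abs{XO} \leq \abs{XE} + \abs{OE}$ then gives $\abs{XO} \leq \sqrt{r^2 - \delta^2/4} + \sqrt{1 - \delta^2/4} = f_+(\delta)$, which is exactly the far intersection point $C$ of Lemma~\ref{lm:AB} written in the notation of Lemma~\ref{lm:fpm}. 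Since $\delta \geq \chord{\phi}$ and $f_+$ is a decreasing function by Lemma~\ref{lm:fpm}, I conclude $f_+(\delta) \leq f_+\!\left(\chord{\phi}\right) = \sqrt{r^2 - \sin^2\!\left(\frac{\phi}{2}\right)} + \coss{\frac{\phi}{2}} = R$, so that $\IG \subseteq \ID_A \cap \ID_B$ is contained in the claimed disk.

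For the empty case $r < \sinn{\frac{\phi}{2}}$, the pair produced above has chord $\delta > 2r$, so the intersection hypothesis $r > \delta/2$ of Lemma~\ref{lm:AB} fails and the two disks $\ID_A$ and $\ID_B$ are disjoint; hence $\IG \subseteq \ID_A \cap \ID_B = \emptyset$. The step I expect to require the most care is the lens containment: the median-identity route above is the clean way to see that the far tip $C$ is the extreme point of the lens, whereas the alternative of directly maximizing the convex function $\abs{XO}^2$ over the convex lens and arguing that the maximum is attained at the tip rather than along the two bounding arcs is noticeably more delicate. I would also keep an eye on the admissible range of $\phi$, since the identification $\sqrt{1 - \sin^2(\phi/2)} = \coss{\frac{\phi}{2}}$ and the chord-lemma hypothesis both presume $\phi \leq \pi$.
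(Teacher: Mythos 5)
Your argument is correct on the range $0 < \phi \leq \pi$, and there it is arguably cleaner than the paper's own proof: the paper argues Case~1 by contradiction, via an assertion that boundary points of $\IG$ are intersection points of two circles of radius $r$ (so that Lemma~\ref{lm:AB} applies to them), whereas your median-identity bound $\abs{XO} \leq \abs{XE} + \abs{OE} = f_+(\delta)$ controls the entire lens $\ID_A \cap \ID_B$ directly and avoids any claim about the structure of the boundary of $\IG$. Your treatment of the degenerate case $r < \sinn{\frac{\phi}{2}}$ is also fine. (The $\eps \to 0$ step is a fixable technicality: rather than extracting a limiting pair of undiscovered points, which need not itself be undiscovered, take the limit in the bound $\abs{XO} \leq f_+\left(\chord{\phi-\eps}\right)$ for each fixed $X \in \IG$.)

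The genuine gap is the case $\pi < \phi \leq 2\pi$, which you flag in your closing sentence but do not resolve, and which no choice of a single pair $A, B$ can resolve. When $\phi > \pi$ we have $\coss{\frac{\phi}{2}} < 0$, so the claimed radius is $R = f_-\left(\chord{\phi}\right)$, the \emph{near} intersection point of Lemma~\ref{lm:AB}, strictly smaller than $f_+\left(\chord{\phi}\right)$. A two-disk reduction can never reach this: the lens $\ID_A \cap \ID_B$ always contains its far tip, at distance $f_+(\delta)$ from the origin. For instance, at $\phi = 2\pi$ with $r > 1$ the lemma claims $R = r - 1$, while the best conceivable two-disk bound (antipodal $A, B$, so $\delta = 2$) is $f_+(2) = \sqrt{r^2 - 1} > r - 1$. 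Excluding the far tip requires disks centered at undiscovered points spread over more than half of $\UC$, which is exactly what the paper's Case~2 exploits: if some $G \in \IG$ sat at distance $f_+(\delta)$ (or at $f_-(\delta)$ with $\delta > \chord{\phi}$) from the origin, then, since every undiscovered point must lie within $r$ of $G$, the larger (respectively smaller) arc between $A$ and $B$ would have to be entirely discovered, contradicting $\phi > \pi$ (respectively contradicting the chord bound). This case is not an idle generality: Lemma~\ref{lm:queen_loc} applies the present lemma at $t = 1$ with $\phi(1) = 2\pi$ to conclude $R(1) = \IT - 2 < 1$, whereas your $f_+$-type bound would give $R(1) = \IT$, from which nothing follows. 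To complete the proof you must supply an argument of this many-disk type for $\phi > \pi$.
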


%\iffalse
\begin{proof}
    By Lemma~\ref{lm:chord} there exists two undiscovered points $A,\ B \in \UC$ such that the length between them along the perimeter of $\UC$ is at least $\phi$. Take these two points and set $\delta \geq 2\sinn{\frac{\phi}{2}}$ as the length of the chord connecting them. Since $A$ and $B$ are unexplored there must exist a corresponding point $G \in \IG$ such that $G$ is at most a distance $r$ from both $A$ and $B$. Since the chord connecting $A$ and $B$ has length $\delta$, $G$ is at least a distance $\frac{\delta}{2} \geq \sinn{\frac{\phi}{2}}$ from one of $A$ and $B$. We can therefore conclude that in order for $G$ to exist we must have $r \geq \sinn{\frac{\phi}{2}}$. This proves the second part of the lemma.
 
    Take $r \geq \sinn{\frac{\phi}{2}}$ and define the functions $f_{\pm}(x)$ as in Lemma~\ref{lm:fpm}. Observe that
    \[R = \begin{cases}
            f_+\left( 2\sinn{\frac \phi 2} \right),& 0 \leq \phi \leq \pi\\
            f_-\left( 2\sinn{\frac \phi 2} \right),& \pi < \phi \leq 2\pi.
        \end{cases}
    \]
    We consider the cases $\phi \leq \pi$ and $\phi > \pi$ separately.\\

    \noindent \textbf{Case 1:} Take $\phi \leq \pi$ and assume that the lemma is false. In this case there must be a point on the boundary of $\IG$ that is at a distance $d_G > f_{+}(2 \sinn{\frac \phi 2})$ from the origin. Let $G$ be such a point. Since $G$ is on the boundary of $\IG$ there are two undiscovered points $A,\ B \in \UC$ that are at a distance $r$ from $G$. The point $G$ is therefore a point of intersection of two circles $\IC_A$ and $\IC_B$ centered on $A$ and $B$ with radii $r$. If $\delta$ is the length of the chord connecting $A$ and $B$ then by Lemma~\ref{lm:AB} the point $G$ lies a distance $f_{+}(\delta)$ or $f_-(\delta)$ from the origin. Since $f_+(x) \geq f_-(x)$ then $G$ must in fact lie a distance $f_{+}(\delta)$ from the origin. By assumption $d_G > f_{+}\left(2 \sinn{\frac \phi 2}\right)$ and thus $f_{+}(\delta) > f_{+}\left(2 \sinn{\frac \phi 2}\right)$. Since $f_{+}$ is a decreasing function we can further say that $\delta < 2 \sinn{\frac \phi 2}$ and we can thus conclude that there cannot exist two undiscovered points $A,\ B \in \UC$ a distance $\delta \geq  2 \sinn{\frac \phi 2}$ from each other. However, Lemma~\ref{lm:chord} states there must exist two undiscovered points in $\UC$ such that the chord joining them has length at least $2 \sinn{\frac \phi 2}$. We have arrived at a contradiction and must therefore accept that the lemma is valid in the case that $\phi \leq \pi$.\\

    \noindent \textbf{Case 2:} Take $\phi > \pi$ and assume that the lemma is false. In this case there must be a point $G$ on the boundary of $\IG$ that is a distance $d_G > f_-\left(2 \sinn{\frac \phi 2}\right)$ from the origin. Since $G$ is on the boundary of $\IG$ there are two undiscovered points $A,\ B \in \UC$ that are at a distance $r$ from $G$, and, as before, if $\delta$ is the length of the chord connecting $A$ and $B$, the point $G$ must lie a distance $f_{+}(\delta)$ or $f_-(\delta)$ from the origin. Assume first that $G$ is at a distance $f_-(\delta)$. Then by assumption that $d_G > f_{-}\left(2 \sinn{\frac \phi 2}\right)$ we get $f_{-}(\delta) > f_{-}\left(2 \sinn{\frac \phi 2}\right)$. By Lemma~\ref{lm:fpm} and assumption that $r>1$ we can conclude that $f_-$ increases with $x$. We can thus say that $\delta > 2 \sinn{\frac \phi 2}$. Furthermore, since $G \in \IG$, there cannot be any undiscovered points in $\UC$ that are a distance $r$ from $G$. This implies that the smaller arc connecting $A$ and $B$ must be discovered and as such $\phi \leq 2\pi - 2\sin^{-1}\left(\frac \delta 2 \right)$ or $\delta \leq 2\sinn{\pi-\frac \phi 2} = 2\sinn{\frac \phi 2}$. This, however, contradicts with our previously derived condition that $\delta > 2\sinn{\frac \phi 2}$. Now assume that $G$ is at a distance $f_+(\delta)$ from the origin. In a similar manner to the previous case we can conclude that the larger arc connecting $A$ and $B$ must be discovered. This arc, however, has a total length at least $\pi$ and thus $\phi \leq \pi$. This also leads to a contradiction since we have assumed that $\phi > \pi$. We must therefore conclude that the lemma is also valid in the case that $\phi > \pi$.
\qed \end{proof}
%\fi

\begin{lemma}\label{lm:queen_loc}
    Consider an algorithm with evacuation time $\IT<3$. Then if the queen is able to search the perimeter of $\UC$ we must have 
    $$R(t) = \sqrt{(\IT-t)^2 - \sin^2\left(\frac{n(t-1)}{2}\right)} - \coss{\frac {n(t-1)} 2} > 1.$$
\end{lemma}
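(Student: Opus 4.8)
The plan is to combine Observation~\ref{obs:disk2} with the containment furnished by Lemma~\ref{lm:distance}. Write $r = \IT - t$ for the common radius of the disks in Observation~\ref{obs:disk2}. First I would bound the unexplored portion of $\UC$ from below: since by assumption the queen is only now able to reach the perimeter, up to time $t$ all searching has been done by the $n$ servants alone, each of which reaches the perimeter at time $1$ and moves at unit speed; hence the explored length is at most $n(t-1)$ and the unexplored length satisfies $\phi \geq \phi_0 := 2\pi - n(t-1)$. The quantity $R(t)$ in the statement is exactly the radius produced by Lemma~\ref{lm:distance} evaluated at $\phi_0$: using $\sinn{\phi_0/2} = \sinn{\frac{n(t-1)}{2}}$ and $\coss{\phi_0/2} = -\coss{\frac{n(t-1)}{2}}$ together with $r = \IT-t$, the expression $\sqrt{r^2 - \sin^2(\phi_0/2)} + \coss{\phi_0/2}$ becomes precisely $R(t)$.

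Next I would turn ``the queen can search the perimeter'' into a distance statement: to lie on the perimeter she must be at distance $1$ from the origin, and by Observation~\ref{obs:disk2} she lies in the intersection region $\IG$. Applying Lemma~\ref{lm:distance} with the actual unexplored length $\phi$ gives $\IG \subseteq$ the disk of radius $R_\phi := \sqrt{r^2-\sin^2(\phi/2)} + \coss{\phi/2}$ about the origin, so the queen's presence on the perimeter forces $R_\phi \geq 1$. It then remains to pass from $R_\phi$ (true $\phi$) to $R(t) = R_{\phi_0}$ (the lower bound $\phi_0 \le \phi$), which is a monotonicity question. Writing $R_\phi = f_\pm(2\sinn{\phi/2})$ as in the proof of Lemma~\ref{lm:distance} and invoking Lemma~\ref{lm:fpm}, $R_\phi$ is decreasing in $\phi$ on $[0,\pi]$ unconditionally, while on $(\pi, 2\pi]$ it is decreasing precisely when $r > 1$. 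The crucial point is that $r > 1$ is forced in exactly the regime where it is needed: if $\phi > \pi$ then $\coss{\phi/2} < 0$, so $R_\phi \geq 1$ already yields $\sqrt{r^2 - \sin^2(\phi/2)} \geq 1 - \coss{\phi/2} > 1$ and hence $r > 1$. Consequently $R_\phi$ is decreasing in $\phi$ on the whole interval $[\phi_0, \phi]$ in every case, giving $R(t) = R_{\phi_0} \geq R_\phi \geq 1$.

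Finally I would upgrade this to the strict inequality. When $R(t) = 1$ the region $\IG$ is contained in the closed unit disk and can meet $\UC$ only at points attaining the maximal distance $1$ from the origin; the queen is then confined to such extremal boundary point(s) and cannot sweep a fresh arc of the perimeter, so she cannot \emph{search} it, contradicting the hypothesis, whence $R(t) > 1$. I expect the main obstacle to be exactly the non-monotonic dependence of $R_\phi$ on $\phi$ as $\phi$ crosses $\pi$; the resolution is the bootstrapping step above, in which the queen's own reachability ($R_\phi \ge 1$) is what guarantees $r > 1$ and thereby restores monotonicity in the only regime where it could fail. Handling the degenerate cases of Lemma~\ref{lm:distance} (empty $\IG$, or $\phi \le \pi$) is routine once this is in place.
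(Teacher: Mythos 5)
Your proposal is correct, and although it shares the paper's skeleton (Observation~\ref{obs:disk2} together with Lemma~\ref{lm:distance}, followed by monotonicity of the bounding radius in the unexplored length $\phi$), it resolves the one delicate step in a genuinely different and tighter way. The paper also notes that $R_\phi = \sqrt{(\IT-t)^2-\sin^2(\phi/2)}+\coss{\frac{\phi}{2}}$ can fail to decrease in $\phi$ when $\phi>\pi$ and $\IT-t<1$, but it dismisses that regime informally: the servants' search rate $\mu$ would then satisfy $\mu<\pi$, and ``this case can be ignored since the robots will need to search at a much higher rate'' --- a forward reference that is only patched by a footnote inside the proof of Theorem~\ref{thm:lb}, so the paper's lemma is not self-contained. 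Your bootstrap closes exactly this gap: the queen's presence on the perimeter already forces $R_\phi\geq 1$, and if $\phi>\pi$ then $\coss{\frac{\phi}{2}}<0$, so $\sqrt{(\IT-t)^2-\sin^2(\phi/2)}\geq 1-\coss{\frac{\phi}{2}}>1$, hence $\IT-t>1$, which by Lemma~\ref{lm:fpm} is precisely the condition making $R_\phi$ decreasing in $\phi$ on $(\pi,2\pi]$; since monotonicity on $[0,\pi]$ is unconditional, $R(t)=R_{\phi_0}\geq R_\phi\geq 1$ follows on the whole interval $[\phi_0,\phi]$ in every case, where $\phi_0=2\pi-n(t-1)$. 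This is what each approach buys: the paper's argument is shorter but leans on the downstream theorem; yours is self-contained and rigorous at the very point where the paper hand-waves. Two minor remarks: you consistently use explored length $n(t-1)$, matching the statement, whereas the paper's proof text slips into $2\pi-nt$; and your final upgrade to the strict inequality $R(t)>1$ (confining the queen to extremal points of $\IG$) is the one informal step in your write-up --- but the paper's own proof only ever concludes $R(t)\geq 1$ despite the strict statement, and the distinction is immaterial since Theorem~\ref{thm:lb} derives its contradiction from $R(t_c)\leq \frac 2n<1$.
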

%\iffalse
\begin{proof}
    By Observation~\ref{obs:disk2} and Lemma~\ref{lm:distance} we can immediately conclude that the queen must be within a distance $R(t) = \sqrt{(\IT-t)^2 - \sin^2\left(\frac{\phi(t)}{2}\right)} + \coss{\frac {\phi(t)} 2}$ of the origin at any time $t$. Consider the time $t=1$. At this time the robots have not been able to search any of the perimeter of $\UC$ and thus $\phi(1) = 2\pi$. At this time the queen can be a distance at most $R(1) = \IT-2$ from the origin and, since we have assumed that $\IT < \IT_0 < 3$, we have $R(1) < 1$. Therefore the queen cannot be on the perimeter of $\UC$ at the time $t=1$.
    
    Now, for a given fixed time $t$ consider how $R$ changes with $\phi(t)$. We find that
    \[\diff{R}{\phi} = -\frac{1}{2}\sinn{\frac{\phi}{2}}\left[1+\frac{\coss{\frac{\phi}{2}}}{\sqrt{(\IT-t)^2 - \sin^2\left(\frac{\phi(t)}{2}\right)}}\right].\]
    First consider the case that $\pi < \phi \leq 2\pi$. We want to determine when $R$ is increasing with $\phi$. Since $\coss{\frac{\phi}{2}}<0$ for  $\pi < \phi \leq 2\pi$, $R$ can only increase with $\phi$ if $\sqrt{(\IT-t)^2 - \sin^2\left(\frac{\phi(t)}{2}\right)} < \cos{\frac{\phi}{2}}$ or if $\IT-t < 1$. Now assume that at the time $t$ the robots have searched the perimeter at a rate of $\mu$ such that $\phi = 2\pi - \mu t$. Since we are considering the case that $\phi > \pi$ we need $\pi > \mu t$. To have $\phi$ increasing we needed $\IT-t < 1$ or $t > \IT-1$ and thus we must have $\pi > \mu (\IT-1)$ or $\mu < \frac \pi{\IT-1}$. A trivial lower bound on $\IT$ is 2 and thus $\mu < \pi$. We claim that this case can be ignored since the robots will need to search at a much higher rate if they are to achieve the lower bound of Theorem~\ref{thm:lb}. 
    
    In the second case both $\coss{\frac \phi 2} \geq 0 $ and $\sinn{\frac \phi 2}\geq0$ for $0 \leq \phi \leq \pi$. Thus, it is not possible that $R$ increases with $\phi$. 
    
    We can conclude from the above analysis that $R$ decreases with $\phi$ in all reasonable cases and thus we maximize $R$ when $\phi(t)$ is minimized. Since the queen cannot be on the perimeter of $\UC$ at the beginning of the algorithm the robots can search at most at a rate $n$ and therefore $\phi(t)$ is minimized when $\phi(t) = 2\pi-nt$. Thus, up until the time the queen reaches the perimeter of $\UC$, the robots must be located within a distance $R(t) = \sqrt{(\IT-t)^2 - \sin^2\left(\frac{nt}{2}\right)} - \coss{\frac {nt} 2}$ of the origin. We can finally conclude that in order for the queen to search we must have $R(t) = \sqrt{(\IT-t)^2 - \sin^2\left(\frac{nt}{2}\right)} - \coss{\frac {nt} 2} \geq 1$.
\qed \end{proof}
%\fi
Armed with these lemmas we are now able to tackle our main result.

\begin{proof}(Theorem~\ref{thm:lb})
    Set $\IT_0 = 1 + \frac{2}{n}\cos^{-1}\left(\frac{-2}{n}\right) + \sqrt{1 - \frac{4}{n^2}}$ and assume we have an algorithm with an evacuation time $\IT < \IT_0$. By Lemma~\ref{lm:lb2}, this implies that the queen must search the perimeter of $\UC$ before the time $t_c = 1 + \frac{2}{n}\cos^{-1}\left(\frac{-2}{n}\right)$.\footnote{Alternatively we can say that the robots must search at a collective 
    rate $> n$ by the time $t_c$. This is why we were able to ignore the ``unreasonable case'' in Lemma~\ref{lm:queen_loc}} Assume that at the time $t_c$ the robots have collectively searched the perimeter of $\UC$ at a rate $\mu$ satisfying $n < \mu \leq n+1$. Then at the time $t_c$ the unexplored subset of $\UC$ has length $\phi(t) = 2\pi-\mu (t_c-1) = 2\pi - 2\frac{\mu}{n}\acoss{\frac{-2}{n}} < \pi$. 
    Since $\phi(t_c) \leq \pi$ we can use Lemma~\ref{lm:distance} to say that the queen must be located within a distance of $R(t_c)$ of the origin at the time $t_c$. Furthermore, in order for the queen to have searched the perimeter of $\UC$ at the time $t_c$, we must have $R(t_c) \geq 1$. However, observe that
    \small 
    \begin{align*}
        R(t_c) &= \sqrt{(\IT-t_c)^2 - \sin^2\left(\frac{n(t_c-1)}{2}\right)} - \coss{\frac {n(t_c-1)} 2} \\
        &\leq \sqrt{(\IT_0-t_c)^2 - \sin^2\left(\frac{n(t_c-1)}{2}\right)} - \coss{\frac {n(t_c-1)} 2}\\
        &= \sqrt{1-\frac{4}{n^2} - \sin^2\left(\acoss{\frac{-2}{n}}\right)} - \coss{\acoss{\frac{-2}{n}}} \\
        &= \frac{2}{n}
    \end{align*}
    \normalsize
    which is clearly less than one for $n \geq 4$. We have therefore arrived to a contradiction and must conclude that the lower bound holds.
    
    To determine the asymptotic behaviour of $\IT_0$ we can compute a Taylor series of $\IT_0$ about $n = \infty$. We find that the first few terms in the series are $2+\frac{\pi}{n}+\frac{2}{n^2}$.
\qed \end{proof}

\section{Conclusions}
\label{secconclusion}

We studied an evacuation problem concerning priority search on the perimeter of a unit disk where only one robot (the queen) needs to exit from an unkown location. We focused on the case of $n \geq 4$ servants and showed in Section~\ref{sec6} that for any $n\geq4$ the queen can be evacuated in time at most $2+4(\sqrt{2}-1)\frac{\pi}{n}$. Furthermore, in Section~\ref{sec7}, we demonstrated that the queen cannot be evacuated in time less than $1+\frac{2}{n}\acoss{\frac{-2}{n}} + \sqrt{1-\frac{4}{n^2}} > 2+\frac{\pi}{n}+\frac {2}{n^2}$. Thus, in the limit of large $n$, we are left with a gap of $(4\sqrt{2}-5)\frac{\pi}{n} \approx 0.657\frac{\pi}{n}$ between the best upper and lower bounds. We conjecture that Algorithm~\ref{alg:ns} is in fact optimal. We will now justify this conjecture.

As was previously mentioned, one might think from Algorithm~\ref{alg:ns} that, since the queen is able to reach the perimeter of $\UC$ before the servants have finished their search, it would be possible to improve our algorithm. However, this is not the case -- similar to the proof of Theorem~\ref{thm:lb} there are critical times ($\frac n2$ of them) that occur before the queen reaches the perimeter and anything she does after these critical times cannot improve the evacuation time. These critical times result from a tradeoff between maximizing the rate at which the servants search -- for which the queen should remain near the origin -- and minimizing the distance of the queen from possible exits near the end of the algorithm -- for which the queen should be near the perimeter. Furthermore, in order to achieve the best tradeoff, the queen should travel as fast as she can from the origin to the perimeter. In other words, between these critical times, the queen should maximize her radial velocity. If we could prove that the queen does not need to participate in searching then it would not be so difficult to conclude why Algorithm~\ref{alg:ns} would be optimal. Any other trajectory of the queen between the critical search times will result in the same or a reduced radial velocity of the queen. It therefore does not seem likely that, with a reduced radial velocity, we can reduce the evacuation time.

In addition to improving the bounds obtained in this paper there are several interesting open problems related to priority search and evacuation. In particular, we may define a {\em weighted evacuation} problem (for a given group of agents) as a generalization of the priority evacuation problem studied here. One can differentiate on agent preferences by assigning a weight $w_i$ to each agent $i$ and require to evacuate a subset of agents of total weight $\geq W$ in minimum time. With this formulation in mind, the regular evacuation problem (see \cite{CGGKMP}) is the case where $w_i = 1$ for all agents and $W = n$, while for the problem considered in this work $w_i=0$ for all agents except the queen for which $w_{queen} = 1$ and $W = 1$.

\bibliographystyle{plain}
\bibliography{refs}
%\pagebreak
%\appendix
%\input{appendix}

\end{document}